\newenvironment{theo}[1][Theorem]{\begin{trivlist}
\item[\hskip \labelsep {\bfseries #1}]}{\end{trivlist}}
\newenvironment{prop}[1][Proposition]{\begin{trivlist}
\item[\hskip \labelsep {\bfseries #1}]}{\end{trivlist}}
\newcommand{\comment}[1]{{}}
\newcommand{\commentMS}[1]{\comment{{\bf MS: } #1}}
\newcommand{\calx}{\mathcal{X}}
\newcommand{\caly}{\mathcal{Y}}
\newcommand{\calz}{\mathcal{Z}}
\newcommand{\calk}{\mathcal{K}}
\newcommand{\calh}{\mathcal{H}}
\newcommand{\call}{\mathcal{L}}
\newcommand{\calu}{\mathcal{U}}
\newcommand{\cals}{\mathcal{S}}
\newcommand{\cala}{\mathcal{A}}
\newcommand{\calb}{\mathcal{B}}
\newcommand{\maxj}[2]{\max^{#2}\!{#1}}
\newcommand{\resp}[1]{{\color{red}{[\emph{#1}]}}}
\def \h {h}
\newcommand{\rev}[1]{{#1}}
\newcommand{\smallsum}[1]{\textstyle{\sum_{#1}\:}}
\newenvironment{change} {} {}
\begin{document}

\title{Differential Privacy: on the trade-off between Utility and Information Leakage\thanks{This work has been partially supported by the project ANR-09-BLAN-0169-01  PANDA, by  the INRIA DRI  Equipe Associ\'ee PRINTEMPS and by the RAS L.R. 7/2007 project TESLA.}
}
\author{ M\'ario S. Alvim$^1$ \and Miguel E. Andr\'es$^1$ \and Konstantinos Chatzikokolakis$^1$ \and \\ Pierpaolo Degano$^2$ \and Catuscia~Palamidessi$^1$}
\institute{
$^1$ INRIA and LIX, Ecole Polytechnique, France.\\
$^2$ Dipartimento di Informatica, Universit\`a di Pisa, Italy.
}

\maketitle

\begin{abstract}
Differential privacy is a notion of  privacy that has become very popular in the database community. Roughly, the idea is that a randomized query mechanism provides sufficient privacy protection if the ratio between the probabilities that two adjacent datasets give the same answer is bound \commentMS{bounded?} by $e^\epsilon$. In the field of information flow there is a similar concern for controlling information leakage, i.e. limiting the possibility of inferring the secret information from the observables. In recent years, researchers have proposed to quantify the leakage in terms of min-entropy leakage, a concept strictly related to the Bayes risk. 
In this paper, we show how to model the query system in terms of an information-theoretic channel, and we compare the notion of differential privacy with that of min-entropy leakage. We show that  differential privacy implies a bound on the min-entropy leakage, but not vice-versa. Furthermore, we show that our bound is tight. 
Then, we consider  the utility of the randomization mechanism, which represents how close the randomized answers are to the real ones, in average.  We show that the notion of differential privacy implies a bound on utility, also tight, and we propose a method that  under certain conditions  builds an optimal randomization mechanism, i.e. a mechanism which provides the best utility while guaranteeing $\epsilon$-differential privacy. 
\end{abstract}

\section{Introduction}
\label{section:introduction}

The area of statistical databases has been one of the first communities to consider the issues related to the protection of information. Already some decades ago, 
Dalenius  \cite{Dalenius:77:ST} proposed a famous ``ad omnia'' privacy desideratum: nothing about an individual should be learnable from the database that could not be learned without access to the database.

\paragraph{Differential privacy.}

Dalenius' property is too strong to be useful in practice: it has been shown by Dwork  \cite{Dwork:06:ICALP} that no useful database can provide it.  
In replacement, Dwork has proposed the notion of \emph{differential privacy}, which has had an extraordinary impact in the community. Intuitively, such notion is based on the idea that the presence or the absence of an individual in the database, or its particular value, should not affect in a significant way 
the probability of obtaining a certain answer for a given query \cite{Dwork:06:ICALP,Dwork:10:SODA,Dwork:11:CACM,Dwork:09:STOC}. 
\begin{change}Note that one of the important characteristics of differential privacy is that it abstracts away from the attacker's auxiliary information. The attacker might possess information about the database from external means, which could allow him to infer an individual's secret. Differential privacy ensures that no extra information can be obtained because of the individual's presence (or its particular value) in the database.\end{change}

Dwork has also studied a technique to create an $\epsilon$-differential private mechanism from an arbitrary numerical query. 
This is achieved by adding random noise to the result of the query, drawn from a Laplacian distribution with variance depending on $\epsilon$ and the query's sensitivity, i.e. the maximal difference of the query between any neighbour databases \cite{Dwork:11:CACM}. 

\paragraph{Quantitative information flow.}
The problem of preventing the leakage of secret information has been a pressing concern 
also in the area of software systems, and has motivated a very active line of research  called \emph{secure information flow}. 
In this field, similarly to the case of privacy, the goal at the beginning was ambitious:  to ensure \emph{non-interference}, 
which means complete lack of leakage. But, as for Dalenius' notion of privacy, non-interference is too strong for being obtained 
in practice, and the community has started exploring weaker notions. Some of the most popular approaches are quantitative; they do not provide a yes-or-no answer but instead try to quantify the amount of leakage using techniques from information theory. See for instance \cite{Clark:01:QAPL,Clark:05:JLC,Clarkson:09:JCS,Kopf:07:CCS,Malacaria:07:POPL,Malacaria:08:PLAS,Smith:09:FOSSACS}. 

The various approaches in the literature mainly differ on the underlying notion of entropy. Each entropy is related to the type of attacker we want to model, and to the way we measure its success (see \cite{Kopf:07:CCS} for an illuminating discussion of this relation). The most widely used is Shannon entropy~\cite{Shannon:48:Bell}, which models an adversary trying to find out the secret $x$ by asking questions of the form ``does $x$ belong to a set $S$?''. Shannon entropy is precisely the average number of questions necessary to find out the exact value of $x$ with an optimal strategy (i.e. an optimal choice of the $S$'s). 
The other most popular notion of entropy in this area is the min-entropy, proposed by R\'enyi \cite{Renyi:61:Berkeley}. The corresponding notion of attack is a \emph{single try} of the form ``is $x$ equal to value $v$?''. 
Min-entropy is precisely the logarithm of the probability of guessing the true value with the optimal strategy, which consists, of course, in selecting the $v$ with the highest probability. 
It is worth noting that the conditional min-entropy,
representing the a posteriori probability of success, is the converse of the Bayes risk~\cite{Cover:06:BOOK}.
Approaches based on min-entropy include \cite{Smith:09:FOSSACS,Braun:09:MFPS} while the Bayes risk
has been used as a measure of information leakage in \cite{Braun:08:FOSSACS,Chatzikokolakis:08:JCS}. 

\begin{change}
In this paper, we focus on the approach based on min-entropy. As it is typical in the areas of both quantitative information flow and differential privacy \cite{Kasiviswanathan:08:CORR,Ghosh:09:STC}, we model the attacker's side information as a prior distribution on the set of all databases. In our results we abstract from the side information in the sense that we prove them for all prior distributions. Note that an interesting property of min-entropy leakage is that it is maximized in the case of a uniform prior \cite{Smith:09:FOSSACS,Braun:09:MFPS}. The intuition behind this is that the leakage is maximized when the attacker's initial uncertainty is high, so there is a lot to be learned. The more information the attacker has to begin with, the less it remains to be leaked.
\end{change}

\paragraph{Goal of the paper}
The first goal of this paper is to explore the relation between differential privacy and quantitative information flow. First, we address the problem of characterizing the protection that differential privacy provides with respect to information leakage. 
Then, we consider the problem of the utility, that is the relation between the reported answer and the true answer. Clearly, a purely random result is useless, the reported answer is useful only if it provides information about the real one. It is therefore interesting to quantify the utility of the system and explore ways to improve it while preserving privacy. 
We attack this problem by considering the possible structure that the query induces on the true answers.

\paragraph{Contribution.}
The main contributions of this paper are the \rev{ following:}
\begin{itemize}
\item We propose an information-theoretic framework to reason about both information leakage and utility. 

\item We prove that $\epsilon$-differential privacy implies a bound on the information leakage. The bound is tight \begin{change}and holds for all prior distributions\end{change}. 

\item We prove that $\epsilon$-differential privacy implies a bound on the utility. We prove that, under certain conditions, the bound is tight \begin{change}and holds for all prior distributions\end{change}.

\item We identify a method that, under certain conditions, constructs the randomization mechanisms which maximizes utility while providing $\epsilon$-differential privacy. 


\end{itemize}

\paragraph{Plan of the paper.}
The next section introduces some necessary background notions. Section 3 proposes an information-theoretic view of the database query systems, and of its decomposition in terms of the query and of the randomization mechanisms. 
Section 4 shows  that differential privacy implies a bound on the min-entropy leakage, and that the bound is tight. 
Section 5 shows that differential privacy implies a bound on the utility, and that under certain conditions the bound is tight. Furthermore it shows how to construct \rev{an optimal} randomization mechanism. 
Section 6 discusses related work, and Section 7 concludes.

The proofs of the results are in the appendix. 

\section{Background}
\label{section:background}

This section recalls some basic notions on differential privacy and information theory.

\subsection{Differential privacy}

\begin{change}
The idea of differential privacy is that a randomized query provides sufficient privacy protection if two databases differing on a single row produce an answer with similar probabilities, i.e. probabilities whose ratio is bounded by $e^\epsilon$ for a given $\epsilon\ge 0$. More precisely:
\end{change}
\begin{definition}[\cite{Dwork:11:CACM}]
	\label{def:diff-privacy-1}
	A randomized function $\mathcal{K}$ satisfies \emph{$\epsilon$-differential privacy} if for all of data sets $D'$ and $D''$ differing on at most one row, and all $S \subseteq Range(\mathcal{K})$,
		\begin{equation}
		Pr[\mathcal{K}(D') \in S] \leq e^{\epsilon} \times Pr[\mathcal{K}(D'') \in S]		
	\end{equation}	
\end{definition}

\subsection{Information theory and interpretation in terms of attacks}

In the following, $X, Y$ denote two discrete random variables with carriers
${\cal X} = \{x_{0}{}, \ldots,$ $x_{n-1}{}\}$, ${\cal Y} = \{ y_{0}{}, \ldots, y_{m-1}{} \}$, and
probability distributions $p_{X}(\cdot)$,  $p_{Y}(\cdot)$, respectively. An information-theoretic channel is constituted of an input $X$, an output $Y$, and the matrix of conditional probabilities $p_{Y \mid X}(\cdot \mid \cdot)$, where $p_{Y \mid X}(y \mid x)$ represent the probability that $Y$ is $y$ given that $X$ is $x$. We shall omit the subscripts on the probabilities when they are clear from the context.

\paragraph{Min-entropy.}
	
	In \cite{Renyi:61:Berkeley}, R\'enyi introduced a one-parameter family of entropy measures, intended as a generalization of Shannon entropy. The R\'enyi entropy of order $\alpha$ ($\alpha > 0$,  $\alpha \neq 1$) of  a random variable $X$ is defined as $H_\alpha(X) \ =\  \frac{1}{1-\alpha}\log_2\sum_{x \,\in\,{\cal X}} p(x)^\alpha$. We are particularly interested in the limit of $H_\alpha$ as $\alpha$ approaches $\infty$. This is  called \emph{min-entropy}. It can be proven  that $H_\infty(X) \ \stackrel{\rm def}{=}\ \lim_{\alpha\rightarrow \infty}H_\alpha(X) \ =\  - \log_2\,\max_{ x\in{\cal X}}\,p(x)$.
	
	R\'enyi also defined the $\alpha$-generalization of other information-theoretic notions, like the Kullback-Leibler divergence. However, he did not define the $\alpha$-generalization of the conditional entropy, and there is no agreement on what it should be. For the case $\alpha = \infty$, we adopt here the definition proposed in \cite{Dodis:08:SIAMJoC}:
	\begin{equation}\label{eqn:SmithCondEntropyInfty}
			H_\infty(X\mid Y) \ = \  - \log_2 \smallsum{y\in {\cal Y}} p(y)\max_{x\in {\cal X}} \ p(x \mid y)
	\end{equation}
\begin{change}
We can now define the min-entropy leakage as $I _\infty = H_\infty(X) - H_\infty(X\mid Y)$. The worst-case leakage is taken by maximising over all input distributions (recall that the input distribution models the attacker's side information):  $C_\infty= \max_{p_{X}(\cdot)}I_\infty(X;Y)$. It has been proven in~\cite{Braun:09:MFPS} that $C_\infty$ is obtained at the uniform distribution, and that it is equal to the sum of the maxima of each column in the channel matrix, i.e., $C_\infty = \sum_{y \,\in\,{\cal Y}} \max_{x \,\in\,{\cal X}} p(y\mid x)$.
\end{change}

\paragraph{Interpretation in terms of attacks.}
Min-entropy can be related to a model of  adversary who is allowed to ask exactly one question of the form ``is $X = x?$'' (one-try attack). More precisely, $H_\infty(X)$ represents the (logarithm of the inverse of the) probability of success for this kind of attacks with the best strategy, which consists, of course, in choosing the $x$ with the maximum probability. 

The conditional min-entropy $H_\infty(X\mid Y)$ represents (the logarithm of the inverse of) the probability that the same kind of adversary  succeeds in guessing the value of $X$ \emph{a posteriori}, i.e.  after observing the  result of $Y$. The complement of this probability is also known as \emph{probability of error} or \emph{Bayes risk}. Since in general $X$ and $Y$ are correlated, observing $Y$ increases the probability of success. Indeed we can prove formally that $H_\infty(X\mid Y) \leq H_\infty(X)$, with equality if and only if $X$ and $Y$ are independent. The min-entropy leakage $I_\infty(X;Y) = H_{\infty}(X) - H_{\infty}(X|Y)$ corresponds to the \emph{ratio} between the probabilities of success a priori and a posteriori, which is a natural notion of leakage. Note that it is always the case that $I_\infty(X;Y)\geq 0$, which seems desirable for a good notion of leakage.

\section{A model of utility and privacy for statistical databases}
\label{section:model}
In this section we present a model of statistical queries on databases, where
noise is carefully added to protect privacy and, in general, the 
reported answer to a query does not need to correspond to the real one. 
In this model, the notion of information leakage can be used to measure the amount \rev{of} information that an attacker can learn about
the database by posting queries and analysing their (reported) answers.  Moreover, the model allows us to quantify the utility of the
query, that is, how much information about the real answer can be obtained from
the reported one. This model will serve as the basis for exploring the relation
between differential privacy and information flow.

\begin{change}
We fix a finite set $Ind=\lbrace 1,2,\ldots, u\rbrace$ of $u$ individuals participating in the database. In addition,  we fix a  finite set $Val=\lbrace ${\sl{v}}$_1,${\sl v}$_2,\ldots,${\sl v}$_v\rbrace$, representing the set of ($v$ different) possible values for the \emph{sensitive attribute} of each individual (e.g. disease-name in a medical database)\footnote{In case there are several sensitive attributes in the database (e.g. \rev{skin color and presence of a certain medical condition}),  we can think of the elements of $Val$ as tuples.}. Note that the absence of an
individual from the database, if allowed, can be modeled with a special
value in $Val$. As usual in the area of differential privacy \cite{Nissim:07:STOC},
we model a database as a $u$-tuple $D= \{d_0, \ldots, d_{u-1} \}$
where each $d_i \in {\it Val}$ is the value of the corresponding individual.
The set of all databases is $\calx = {\it Val}^u$. Two databases $D,D'$ are
\emph{adjacent}, written $D\sim D'$ iff they differ for the value of exactly one
individual.
\end{change}

\begin{wrapfigure}{r}{0.40\textwidth}
\vspace{-0.5cm}
	\centering
	\includegraphics[width=0.32\textwidth]{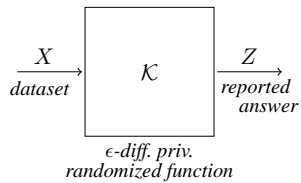}%
	\caption{\!Randomized function $\mathcal{K}$ as a channel}%
	\label{fig:mechanism-k}%
	\vspace{-0.8cm}
\end{wrapfigure}
Let $\calk$ be a randomized function from $\calx$ to $\calz$, where $\calz=Range(\calk)$ (see Figure~\ref{fig:mechanism-k}). This function can be
modeled by a channel with input and output alphabets $\calx,\calz$
respectively. This channel can be specified as usual by a matrix of conditional probabilities $p_{Z|X}(\cdot|\cdot)$. We also denote
by $X,Z$ the random variables modeling the input and output of the channel. The definition of differential privacy can be directly expressed as a property
of the channel: it satisfies $\epsilon$-differential privacy iff
\[ p(z|x)\le e^\epsilon p(z|x') \textrm{ for all } z\in \calz, x,x'\in \calx
\textrm{ with } x\sim x'
\]

Intuitively, the \emph{correlation} between $X$ and $Z$ measures how much
information about the complete database the attacker can obtain by observing the
reported answer. We will refer to this correlation as the \emph{leakage}
of the channel, denoted by $\call(X,Z)$. In Section~\ref{section:leakage}
we discuss how this leakage can be quantified, using notions from information
theory, and we study the behavior of the leakage for differentially private
queries.

We then introduce a random variable $Y$ modeling the true answer to the query
$f$, ranging over $\caly = Range(f)$. The correlation between $Y$ and $Z$
measures how much we can learn about the real answer from the reported one. We
will refer to this correlation as the \emph{utility} of the channel, denoted by
$\calu(Y,Z)$. In Section~\ref{sec:utility} we discuss in detail how utility
can be quantified, and we investigate how to construct a randomization mechanism, i.e. a way of adding noise to the query outputs,
so that utility is maximized while preserving differential privacy.

In practice, the randomization mechanism is
often \emph{oblivious}, meaning that the reported answer $Z$ only depends on
the real answer $Y$ and not on the database $X$. In this case, the randomized function $\mathcal{K}$, seen as 
a channel, can be decomposed into two parts: a channel modeling
the query $f$, and a channel modeling the oblivious randomization mechanism $\calh$. The
definition of utility in this case is simplified as it only depends on properties of the
sub-channel correspondent to $\mathcal{H}$. The leakage relating $X$ and $Y$ and the utility relating $Y$ and $Z$ for a decomposed randomized function
are shown in Figure~\ref{fig:utility-privacy}.
\begin{figure}[tb]
	\centering
	\includegraphics[width=0.7\textwidth]{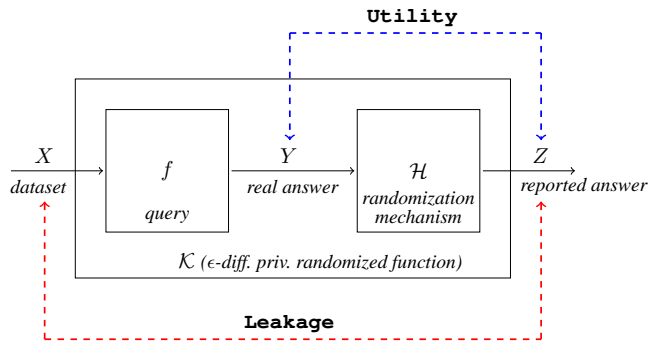}%
	\caption{Leakage and utility for oblivious mechanisms}%
	\label{fig:utility-privacy}%
\end{figure}

\paragraph{Leakage about an individual.}
\begin{change}
As already discussed, $\call(X,Z)$ can be used to quantify the amount of information about the
\emph{whole database} that is leaked to the attacker.
However, protecting the database as a whole
is not the main goal of differential privacy. Indeed, some information is allowed by design to be revealed,
otherwise the query would not be useful. Instead,
differential privacy aims at protecting the value of \rev{each} individual.
Although $\call(X,Z)$ is a good measure of the overall privacy of the system,
we might be interested in measuring how much information about a \emph{single individual} is leaked.

To quantify this leakage, we assume that the values of all other individuals are
already known, thus the only remaining information concerns the individual of
interest. Then we define smaller channels, where only the information of a
specific individual varies. Let $D^- \in {\it Val}^{u-1}$ be a $(u-1)$-tuple
with the values of all individuals except the one of interest. We
create a channel $\mathcal{K}_{D^-}$ whose input alphabet is the set of all
databases in which the $u-1$ other individuals have the same values as in $D^-$.
Intuitively, the information leakage of this channel measures how much
information about one particular individual the attacker can learn if the values
of all others are known to be $D^-$. This leakage is studied in
Section~\ref{sec:individual-leakage}.
\end{change}

\section{Leakage}
\label{section:leakage}




As discussed in the previous section, the correlation $\call(X,Z)$ between $X$
and $Z$ measures the information that the attacker can learn about the database by observing the reported answers.
In this section, we consider min-entropy leakage as a measure of this information,
that is $\call(X,Z) = I_\infty(X;Z)$. We then investigate bounds on information
leakage imposed by differential privacy. These bounds hold for any side information of the attacker, modelled
as a prior distribution on the inputs of the channel.

Our first result shows that the min-entropy leakage of a randomized function $\cal K$ is
bounded by a quantity depending on $\epsilon$, the numbers $u,v$ of individuals and
values respectively. We assume that $v\geq 2$.

\begin{theorem}\label{theo:bound-leakage}
If $\mathcal{K}$ provides \emph{$\epsilon$-differential privacy} then \begin{change}for all input distributions,\end{change}
the min-entropy leakage associated to $\mathcal{K}$ is bounded from above as follows:
\[I_\infty(X;Z)  \leq u\, \log_2\frac{v\, e^\epsilon}{(v-1 +  e^\epsilon)}\]
\end{theorem}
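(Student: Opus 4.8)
The plan is to reduce the statement to a bound on the min-entropy \emph{capacity} and then exploit the combinatorial structure of the adjacency relation on $\calx = Val^u$. Since the theorem must hold for every input distribution, it suffices to bound $C_\infty = \max_{p_X} I_\infty(X;Z)$, because $I_\infty(X;Z) \le C_\infty$ for any prior. By the characterisation recalled in Section~\ref{section:background}, the capacity is attained at the uniform prior and equals $C_\infty = \log_2 \sum_{z\in\calz} \max_{x\in\calx} p(z\mid x)$. Thus the whole problem becomes the purely matrix-theoretic task of showing
\[
\sum_{z\in\calz} \max_{x\in\calx} p(z\mid x) \ \le\ \left(\frac{v\,e^\epsilon}{v-1+e^\epsilon}\right)^{\!u},
\]
after which taking $\log_2$ of both sides yields exactly the claimed bound.

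For the core estimate, fix $z$, let $M(z) = \max_x p(z\mid x)$, and let $x^*$ be a database realising this maximum. The key observation is that differential privacy composes along paths in the Hamming graph on $Val^u$: any database $x$ at Hamming distance $d$ from $x^*$ can be reached by a chain of $d$ pairwise-adjacent databases, so applying the defining inequality $p(z\mid x)\le e^\epsilon p(z\mid x')$ (for $x\sim x'$) once per step gives $p(z\mid x^*)\le e^{d\epsilon} p(z\mid x)$, i.e. $p(z\mid x)\ge M(z)\,e^{-d\epsilon}$. Now I count databases by distance: there are exactly $\binom{u}{d}(v-1)^d$ databases at distance $d$ from $x^*$. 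Summing the lower bounds over all of $\calx$ and recognising the binomial expansion,
\[
\sum_{x\in\calx} p(z\mid x) \ \ge\ M(z)\sum_{d=0}^{u}\binom{u}{d}(v-1)^d e^{-d\epsilon} \ =\ M(z)\bigl(1+(v-1)e^{-\epsilon}\bigr)^{u}.
\]

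Finally I close by double counting. Each row of the channel matrix sums to one, so $\sum_{x}\sum_{z} p(z\mid x) = |\calx| = v^u$; exchanging the order of summation and inserting the previous inequality gives $v^u \ge \bigl(1+(v-1)e^{-\epsilon}\bigr)^{u}\sum_{z} M(z)$, which rearranges to precisely the target estimate on $\sum_z M(z)$. I expect the main obstacle to be the middle step: one must argue carefully that the per-edge factor $e^\epsilon$ accumulates only multiplicatively along a \emph{shortest} Hamming path (so that distance-$d$ databases incur the factor $e^{d\epsilon}$, rather than the crude global factor $e^{u\epsilon}$), and then spot that the distance-layer counts $\binom{u}{d}(v-1)^d$ are exactly the terms that assemble, via the binomial theorem, into the clean closed form $\bigl(1+(v-1)e^{-\epsilon}\bigr)^{u}$. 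Everything else is reduction and bookkeeping.
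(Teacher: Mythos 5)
Your proof is correct, and it takes a genuinely different route from the paper's. Both arguments share the same combinatorial core: differential privacy composed along shortest paths in the Hamming graph on $Val^u$ gives $p(z\mid x)\ge \max_{x'}p(z\mid x')\,e^{-d\epsilon}$ at distance $d$, the distance layers have size $\binom{u}{d}(v-1)^d$, and the binomial theorem assembles the sum into $\bigl(1+(v-1)e^{-\epsilon}\bigr)^u$. Where you differ is in how row-stochasticity is exploited. The paper first invests in two matrix-transformation lemmas (Lemma~\ref{lemma:collapse} and Lemma~\ref{lemma:vu}): it reshapes the channel into a square $v^u\times v^u$ matrix whose column maxima all sit on the diagonal and are equal to a common value $\alpha$, verifies that both $\epsilon$-differential privacy and the leakage survive these transformations, and only then applies the path-and-counting argument to a \emph{single row} to conclude $\alpha\le\bigl(e^\epsilon/(v-1+e^\epsilon)\bigr)^u$. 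You bypass the transformations entirely: you apply the path-and-counting bound once per \emph{column}, anchored at that column's own maximizer, and close with a double count of the total mass $\sum_{x,z}p(z\mid x)=v^u$, which bounds $\sum_z\max_x p(z\mid x)$ directly. Both proofs then rest on the same background fact, that $C_\infty$ is attained at the uniform prior and equals $\log_2\sum_z\max_x p(z\mid x)$, which is what makes the bound hold for all input distributions. What you lose relative to the paper is reusability: the symmetrization lemmas are reused for Proposition~\ref{prop:new-bound} and, in analogous form, for the utility results (Lemma~\ref{lemma:automorphism}), whereas your double-counting step is specific to this theorem. What you gain is a substantially shorter, more elementary, and self-contained proof of Theorem~\ref{theo:bound-leakage}, with far fewer invariants to check.
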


\rev{
}

\begin{wrapfigure}{r}{0.46\textwidth}
\vspace{-0.73cm}
		\centering
		\includegraphics[width=0.36\textwidth]{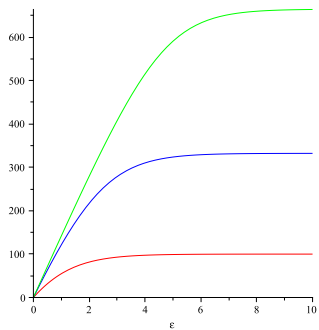}%
		\caption{Graphs of $B(u,v,\epsilon)$ for $u\!\!=\!\!100$ and $v\!\!=\!\!2$ (lowest line), $v\!\!=\!\!10$ (intermediate line), and $v\!\!=\!\!100$ (highest line), respectively.}%
		\label{fig:plots}%
		\vspace{-1.25cm}
\end{wrapfigure}
 \bigskip
Note that this bound $B(u,v,\epsilon) = u\, \log_2\frac{v\, e^\epsilon}{(v-1 +  e^\epsilon)}$  is a continuous function in $\epsilon$, has value $0$ when $\epsilon=0$, 
and converges to $u\, \log_2 v$ as $\epsilon$ approaches infinity. 
Figure~\ref{fig:plots} shows the growth of  $B(u,v,\epsilon)$ along with $\epsilon$, for various fixed values of $u$ and $v$.

\medskip
The following result shows that the bound $B(u,v,\epsilon)$ is \emph{tight}.

\begin{proposition}\label{prop:tight}
For every $u$, $v$, and $\epsilon$ there exists a  randomized function $\mathcal K$
which provides $\epsilon$-differential privacy and whose min-entropy leakage is $I_\infty(X;Z)=B(u,v,\epsilon)$ for the uniform input distribution.
\end{proposition}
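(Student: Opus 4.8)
The plan is to exhibit an explicit channel and show it meets the bound with equality at the uniform input. Since at the uniform distribution the min-entropy leakage satisfies the identity $I_\infty(X;Z)=\log_2\sum_z\max_x p(z\mid x)$ (this follows from the min-entropy capacity characterization recalled in Section~2, using that a uniform prior makes $p(x\mid z)\propto p(z\mid x)$), it suffices to build an $\epsilon$-differentially private channel whose column maxima sum to $\bigl(\tfrac{v\,e^\epsilon}{v-1+e^\epsilon}\bigr)^u$, since the logarithm of this quantity is exactly $B(u,v,\epsilon)$.

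First I would treat a single individual ($u=1$), i.e. a channel whose input and output alphabets both equal $Val$, giving a $v\times v$ matrix. I define
\[
p(z_i \mid x_j) = \begin{cases} \dfrac{e^\epsilon}{v-1+e^\epsilon} & \text{if } i=j,\\[1mm] \dfrac{1}{v-1+e^\epsilon} & \text{if } i\neq j. \end{cases}
\]
Each row sums to $1$, and within every column the ratio between any two entries is exactly $e^\epsilon$, so $\epsilon$-differential privacy holds (for one individual all databases are adjacent). The maximum of column $i$ is the diagonal entry $e^\epsilon/(v-1+e^\epsilon)$, so the column maxima sum to $v\,e^\epsilon/(v-1+e^\epsilon)$, matching $2^{B(1,v,\epsilon)}$.

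Then I would lift this to $u$ individuals by taking the product channel: a database $D=(d_0,\dots,d_{u-1})$ is mapped to $Z=(z_0,\dots,z_{u-1})$ with $p(Z\mid D)=\prod_i p(z_i\mid d_i)$, each factor drawn from the single-individual matrix above. Two adjacent databases $D\sim D'$ differ in exactly one coordinate $k$, so in the ratio $p(Z\mid D)/p(Z\mid D')$ all factors but the $k$-th cancel, leaving a single ratio bounded by $e^\epsilon$; hence the product channel is again $\epsilon$-differentially private, with no degradation of $\epsilon$. For the leakage, the key observation is that the maximization over databases factorizes, since the coordinates $d_i$ are unconstrained: $\max_D\prod_i p(z_i\mid d_i)=\prod_i\max_{d_i}p(z_i\mid d_i)=\bigl(e^\epsilon/(v-1+e^\epsilon)\bigr)^u$ for every output $Z$. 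Summing over all $v^u$ outputs gives $\sum_Z\max_D p(Z\mid D)=\bigl(v\,e^\epsilon/(v-1+e^\epsilon)\bigr)^u$, and applying the uniform-distribution identity yields $I_\infty(X;Z)=B(u,v,\epsilon)$.

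The argument is essentially a routine verification; the only point requiring care is the composition step, namely simultaneously confirming that adjacency touches a single factor (so that privacy is preserved exactly) and that the column-maximum factorizes over the product — the latter being precisely what makes the per-coordinate optima combine into the clean $u$-th power rather than something weaker. I would close by noting that the construction also behaves correctly at the boundary $\epsilon=0$, where the matrix becomes uniform and the leakage collapses to $0=B(u,v,0)$, consistent with Theorem~\ref{theo:bound-leakage}.
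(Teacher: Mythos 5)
Your proposal is correct and takes essentially the same approach as the paper: your $u$-fold product of the single-individual matrix has entry $(e^\epsilon)^{u-d}/(v-1+e^\epsilon)^u$ whenever the output differs from the input database in exactly $d$ coordinates, which is precisely the paper's channel $p_{\mathcal K}(z|x)=\alpha/(e^\epsilon)^{d(x,z)}$ with $\alpha=\bigl(e^\epsilon/(v-1+e^\epsilon)\bigr)^u$ the maximal entry (the paper's writing of $B(u,v,\epsilon)$ in that numerator is evidently a typo for this value, as its own row-sum computation in Theorem~\ref{theo:bound-leakage} shows). The only difference is presentational: your tensor-product decomposition turns the row-sum, differential-privacy, and column-maxima checks into coordinatewise factorizations, cleanly discharging what the paper leaves as ``easy to see.''
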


\begin{example}\label{exa:eyes}
Assume that we are interested in the \rev{eye} color of a certain population 
 $\mathit{Ind}=\{{\it Alice}, {\it Bob}\}$. Let $\mathit{Val}=\{\mathtt{a},\mathtt{b},\mathtt{c}\}$ where $\mathtt{a}$ stands for $\mathit{absent}$ (i.e. the \emph{null} value), $\mathtt{b}$ stands for $\mathit{blue}$, and $\mathtt{c}$ stands for $\mathit{coal}$ (black). 
 We can represent each dataset with a tuple $d_1 d_0$, 
 where $d_0\in \mathit{Val}$ represents the \rev{eye} color of  ${\it Alice}$ (cases $d_0=b$ and $d_0=c$), 
or that ${\it Alice}$ is not in the dataset (case $d_0=a$). The value $d_1$ provides the same kind of information for $\mathit{Bob}$. Note that $v=3$.
Fig~\ref{fig:eyes.a} represents the set $\cal X$ of all possible datasets and its adjacency relation. 
\rev{We now construct the matrix with input $\cal X$ which provides  $\epsilon$-differential privacy and has the highest min-entropy leakage.
From the proof of Proposition~\ref{prop:tight}, we know that each element of the matrix  is of the form $\frac{a}{e^{\epsilon \,  d}}$, where $a$ is the highest value in the matrix, i.e. $a= \frac{v\, e^\epsilon}{(v-1 +  e^\epsilon)} =
 \frac{3\, e^\epsilon}{(2 +  e^\epsilon)}$, and $d$ is the graph-distance (in Fig~\ref{fig:eyes.a}) between (the dataset of) the row which contains such element and (the dataset of) the row with the highest value in the same column. Fig~\ref{fig:eyes.b} illustrates this matrix, where, for the sake of readability, each value $\frac{a}{e^{\epsilon \, d}}$ is represented simply by $d$.}
 \end{example}

\vspace{-0.25cm}
  \begin{figure}[!htb]%
  \centering
		 \subfigure[
		 {The datasets and their adjacency relation}
		 ]{
                    \centering
                    \includegraphics[width=0.3\linewidth]{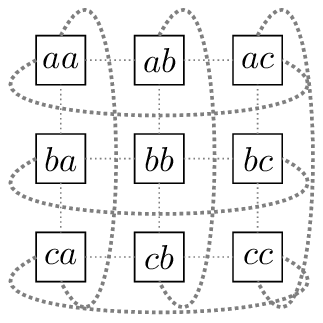}%
                    \label{fig:eyes.a}%
                } \hspace{2cm}
		 \subfigure[
		 {The representation of the matrix}]{
                    \centering
                    \includegraphics[width=0.3\linewidth]{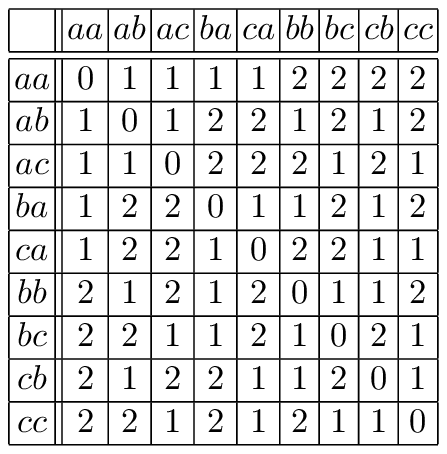}%
                    \label{fig:eyes.b}%
                }
		\caption{Universe and highest min-entropy leakage matrix giving $\epsilon$-differential privacy for Example~\ref{exa:eyes}.}%
\end{figure}

Note that the bound $B(u,v,\epsilon)$ is guaranteed to be reached with the uniform input distribution. We know from the literature \cite{Braun:09:MFPS,Smith:09:FOSSACS} that the $I_\infty$ of a given matrix has its maximum in correspondence of the uniform input distribution, although it may not be the only case. 

The construction of the matrix for Proposition~\ref{prop:tight} gives a square matrix of dimension $v^u \times v^u$. Often, however, the range of $\mathcal K$ is fixed, 
as it is usually related to the possible answers to the query $f$. Hence it is natural to consider the scenario in which 
we are  given a number $r<v^u$, and want to consider only those  $\mathcal K$'s whose range has  cardinality  at most $r$. In this restricted setting, we could find a better bound than the one given by Theorem~\ref{theo:bound-leakage},
as the following proposition shows.



\begin{proposition}\label{prop:new-bound}
Let $\mathcal K$ be a randomized function and let $r = |\mathit{Range}({\mathcal{K})}|$.  
If $\mathcal{K}$ provides \emph{$\epsilon$-differential privacy} then \begin{change}for all input distributions\end{change},
the min-entropy leakage associated to $\mathcal{K}$ is bounded from above as follows:
\[I_\infty(X;Z) \, \leq \,\log_2\frac{r\,(e^{\epsilon})^u}{(v-1+e^\epsilon)^\ell-(e^{\epsilon})^\ell+(e^{\epsilon})^u}\]
where $\ell=\lfloor \log_v r\rfloor$.
\end{proposition}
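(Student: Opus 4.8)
The plan is to control the min-entropy leakage through its capacity and thereby reduce the statement to a purely combinatorial estimate on the channel matrix. Since $I_\infty(X;Z)\le\log_2 S$, where $S=\sum_{z\in\mathcal{Z}}\max_x p(z\mid x)$ is the capacity-determining column-max sum (the min-entropy capacity $C_\infty$ equals $\log_2 S$ and dominates $I_\infty$ for every prior), and $|\mathcal{Z}|=r$, it suffices to prove $S\le rA^u/D$ with $A=e^\epsilon$, $B=v-1+e^\epsilon$, and $D=B^\ell-A^\ell+A^u$. I use that $\mathcal{X}={\it Val}^u$ with its adjacency relation is the Hamming graph, so graph distance equals Hamming distance $d_H$ and the number of databases at distance $d$ from a fixed one is $\binom{u}{d}(v-1)^d$.

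The first ingredient is a cone inequality. For each $z$ let $c_z$ attain $M_z=\max_x p(z\mid x)$. Chaining $\epsilon$-differential privacy along a shortest path gives $p(z\mid x)\ge M_z A^{-d_H(x,c_z)}$ for all $x$. Evaluating the row constraint $\sum_z p(z\mid x)=1$ at each of the $r$ rows $x=c_z$ yields the linear system $\sum_{z'}M_{z'}A^{-d_H(c_z,c_{z'})}\le 1$; summing these $r$ inequalities and exchanging sums gives the single aggregate inequality $\sum_{z'}M_{z'}P_{z'}\le r$, where $P_{z'}=\sum_{z}A^{-d_H(c_z,c_{z'})}$ is the \emph{potential} of the center $c_{z'}$. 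It is instructive that summing the row constraints over \emph{all} $v^u$ rows instead, using $\sum_x A^{-d_H(x,c)}=(B/A)^u$ (constant in $c$ by the binomial theorem), reproduces the $r$-independent bound of Theorem~\ref{theo:bound-leakage}; so any improvement must come precisely from restricting to the $r$ center rows.

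The refinement is organized by $\ell=\lfloor\log_v r\rfloor$, characterized by $v^\ell\le r<v^{\ell+1}$: the $r$ centers cannot be spread out by more than a sub-cube of dimension $\ell$ permits. I expect the extremal configuration to be the maximally spread one, in which the $r-1$ other centers occupy the farthest Hamming shells around any given center. For $r=v^\ell$ this amounts to placing them on the shells at distances $u-\ell+1,\dots,u$ realized by a translated dimension-$\ell$ sub-cube, and the binomial theorem over that sub-cube evaluates the potential to exactly $P_{z'}=1+A^{-(u-\ell)}\big((B/A)^\ell-1\big)=D/A^u$. Substituting $P_{z'}=D/A^u$ into the aggregate inequality gives $S\cdot D/A^u\le r$, that is $S\le rA^u/D$, which is the claim. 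As a consistency check, at $r=v^u$ one has $\ell=u$, the quantity $D$ collapses to $B^u$, and the bound coincides with Theorem~\ref{theo:bound-leakage}.

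The main obstacle is to prove that no placement of the $r$ centers does better than this extremal one. A naive attempt to lower-bound each potential pointwise by $D/A^u$ fails: because several outputs may share the same argmax database, one can isolate a single center and stack all the others far away, driving its potential below $D/A^u$. Such configurations, however, pile the remaining probability mass onto clustered centers of large potential, so the total $S$ stays small. The delicate step is therefore a \emph{global} optimization rather than a pointwise one: using the constraints $M_z\le 1$ together with the full family of row inequalities, I would argue by an exchange/majorization argument that deforming any configuration toward the maximally spread sub-cube configuration only increases $\sum_z M_z$, pinning the maximum at $rA^u/D$. The remaining work is routine: the boundary bookkeeping when $v^\ell\le r<v^{\ell+1}$ (a partially filled outer shell, where the floor makes the stated bound a valid over-estimate), and verification of the binomial identities used above.
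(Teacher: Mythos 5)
Your reduction to the column-max sum $S=\sum_{z}\max_{x}p(z\mid x)$ is sound, the cone inequalities are correct, and both consistency checks (recovering Theorem~\ref{theo:bound-leakage} by summing over all $v^u$ rows, and at $r=v^u$) are accurate. But there is a genuine gap exactly at the step you defer: you never prove that, over all placements of the centers $c_z$ in the Hamming graph, the maximum of $\sum_z M_z$ subject to the row inequalities $\sum_{z'}M_{z'}A^{-d_H(c_z,c_{z'})}\le 1$ is at most $rA^u/D$; ``I would argue by an exchange/majorization argument'' is a plan, not an argument, and that claim is the entire combinatorial content of the proposition. Worse, the extremal configuration toward which you propose to deform does not exist in the Hamming graph once $\ell\ge 2$: requiring $d_H(c_w,c_{w'})=(u-\ell)+d_\ell(w,w')$ for all distinct $w,w'\in\mathit{Val}^\ell$ (where $d_\ell$ is Hamming distance on $\mathit{Val}^\ell$; this is exactly what your evaluation $P_{z'}=D/A^u$ uses) is contradictory. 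Take $v=2$, $\ell=2$, $u>2$: the label-distance-$2$ pairs must be antipodal, so $c_{11}=\overline{c_{00}}$ and $c_{10}=\overline{c_{01}}$, whence $d_H(c_{00},c_{10})=u-d_H(c_{00},c_{01})=1$, contradicting the required value $u-1$. So the exchange argument has no feasible target; what must actually be shown is that every \emph{realizable} configuration (necessarily less spread than the ideal one, by Plotkin/Singleton-type constraints on codes) still yields an LP value no larger than the value $rA^u/D$ of the unrealizable one. Equivalently, in the dual you would need configuration-dependent weights, since the uniform dual weight $A^u/D$ is infeasible precisely because, as you yourself observe, the pointwise bound $P_{z'}\ge D/A^u$ fails.

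This global difficulty is what the paper's proof is engineered to avoid. The paper first transforms the channel matrix, preserving $\epsilon$-differential privacy and min-entropy leakage: Lemma~\ref{lemma:collapse} collapses columns to obtain a square matrix with every column maximum on the diagonal, and the averaging construction of Lemma~\ref{lemma:vu}, adapted in Lemma~\ref{lemma:new-bound}, makes all diagonal entries equal to a common value $\alpha$, with the reduced $v^\ell\times v^\ell$ index set carrying an adjacency $\sim'$ isomorphic to $(\mathit{Val}^\ell,\sim_\ell)$ along which differential privacy yields the \emph{pointwise} lower bound $M'_{i,j}\ge\alpha/(e^\epsilon)^{u-\ell+d}$. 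A single row sum plus the binomial theorem then gives $\alpha\le A^u/D$, hence $S=v^\ell\alpha\le rA^u/D$, with no optimization over center configurations at all: the symmetrization converts your global problem into a one-row count. To complete your route you would either have to import such a symmetrization step (at which point you essentially have the paper's proof) or prove the configuration bound directly, presumably via coding-theoretic arguments. Note also that the general case $v^\ell<r<v^{\ell+1}$, which the paper handles by collapsing the $r-v^\ell$ surplus columns into the columns with largest maxima, is not as routine as you suggest, though that is secondary to the main gap.
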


Note that this bound can be much smaller than the one provided by Theorem~\ref{theo:bound-leakage}. For instance, if $r=v$ 
 this bound becomes:
 \begin{equation*} \log_2 \frac{v\,(e^{\epsilon})^u}{v-1+(e^{\epsilon})^u} \end{equation*}
which  for large  values of $u$ is
much smaller than $B(u,v,\epsilon)$. \rev{In particular, for $v=2$ and $u$ approaching infinity, this bound approaches $1$, while $B(u,v,\epsilon)$ approaches infinity. }

 Let us clarify that there is no contradiction with the fact that the bound $B(u,v,\epsilon)$ is strict: indeed it is strict when we are free to choose the range, but here we fix the dimension of the range.

Finally, note that the above bounds do not hold in the opposite direction. Since min-entropy averages over all observations, low probability observations affect it only slightly. Thus, by introducing an observation with a negligible probability for one user, and zero probability for some other user, we could have a channel with arbitrarily low min-entropy leakage but which does not satisfy differential privacy for any $\epsilon$.
 

\subsection{Measuring the leakage about an individual}
\label{sec:individual-leakage}

As discussed in Section~\ref{section:model}, the main goal of differential
privacy is not to protect information about the complete database, but about
each individual. To capture the leakage about a certain individual, we
start from a tuple $D^- \in {\it Val}^{u-1}$ containing the given (and known) values of all
other $u-1$ individuals. Then we create a channel whose input $X_{D^-}$ ranges
over all databases where the values of the other   individuals are exactly those of $D^-$
and only the value of the selected  individual varies. Intuitively,
$I_{\infty}(X_{D^-};Z)$ measures the leakage about the individual's value where
all other values are known to be as in $D^-$. As all these databases are adjacent,
differential privacy provides a stronger bound for this leakage.

\begin{theorem}
	\label{theo:Renyi}
	If $\mathcal{K}$ provides \emph{$\epsilon$-differential privacy} then 
	for all $D^- \in {\it Val}^{u-1}$ and \begin{change}for all input distributions\end{change},
	the min-entropy leakage about an individual is bounded from above as follows:

	\[ I_{\infty}(X_{D^-};Z) \leq \log_2 e^\epsilon \]
\end{theorem}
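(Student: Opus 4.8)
The plan is to bound the leakage of the sub-channel $\mathcal{K}_{D^-}$ by its min-entropy capacity and then exploit the fact that all of its inputs are pairwise adjacent. Recall from the background that, for every prior distribution, $I_\infty(X_{D^-};Z) \le C_\infty(\mathcal{K}_{D^-})$, and that the capacity equals the (base-$2$ logarithm of the) sum of the column maxima of the channel matrix, i.e. $C_\infty(\mathcal{K}_{D^-}) = \log_2 \sum_{z\in\calz}\max_{x} p(z\mid x)$, where $x$ ranges over the inputs of $\mathcal{K}_{D^-}$. Since this bounds the capacity, which dominates $I_\infty$ for every input distribution, it suffices to prove $\sum_{z\in\calz}\max_{x} p(z\mid x) \le e^\epsilon$; taking logarithms then yields the claimed bound for all priors at once.

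First I would record the crucial structural fact: every input of $\mathcal{K}_{D^-}$ is a database that coincides with $D^-$ on the $u-1$ fixed individuals, so any two of them differ only in the value of the individual of interest, i.e. they are adjacent. Fix an arbitrary reference input $x_0$ of $\mathcal{K}_{D^-}$. For each output $z$, let $x_z^*$ attain $\max_{x} p(z\mid x)$. Because $x_z^*$ and $x_0$ are adjacent, $\epsilon$-differential privacy gives $p(z\mid x_z^*) \le e^\epsilon\, p(z\mid x_0)$, that is $\max_{x} p(z\mid x) \le e^\epsilon\, p(z\mid x_0)$ for every $z$.

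Then I would simply sum this inequality over all outputs and use that the conditional probabilities out of the fixed input $x_0$ form a probability distribution:
\[
\sum_{z\in\calz}\max_{x} p(z\mid x) \ \le\ e^\epsilon \sum_{z\in\calz} p(z\mid x_0) \ =\ e^\epsilon .
\]
Combining this with the capacity characterization gives $C_\infty(\mathcal{K}_{D^-}) \le \log_2 e^\epsilon$, and hence $I_\infty(X_{D^-};Z)\le \log_2 e^\epsilon$ for every prior.

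There is no deep obstacle here; the proof is short precisely because the adjacency structure is maximally favourable. The only points requiring care are (i) the reduction to capacity, which is what makes the statement hold uniformly over all input distributions, and (ii) recognizing that, in contrast to Theorem~\ref{theo:bound-leakage} where inputs may be far apart in the adjacency graph and those distances accumulate in the exponent, here every pair of inputs is at graph-distance one, so a single application of the differential-privacy inequality against one reference column suffices and the bound collapses to the constant $\log_2 e^\epsilon$.
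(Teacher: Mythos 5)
Your proof is correct, and it reaches the bound by a slightly different route than the paper. The paper's proof of Theorem~\ref{theo:Renyi} goes through a self-contained lemma (Lemma~\ref{lemma:Renyi} in the appendix) that works for an \emph{arbitrary} prior directly: it writes $-H_\infty(X_{D^-}\mid Z)=\log_2\sum_z \max_x p(x)\,p(z\mid x)$ via Bayes' theorem, bounds $p(z\mid x)\le e^\epsilon p(z\mid \hat x)$ for a fixed reference input $\hat x$ (legitimate because all inputs of $\mathcal{K}_{D^-}$ are pairwise adjacent), and sums the reference column to $1$, obtaining $H_\infty(X_{D^-}\mid Z)\ge H_\infty(X_{D^-})-\epsilon\log_2 e$. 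You instead first discard the prior by the capacity bound $I_\infty(X_{D^-};Z)\le C_\infty(\mathcal{K}_{D^-})$ and then invoke the characterization $C_\infty=\log_2\sum_z\max_x p(z\mid x)$ from the cited literature (\cite{Braun:09:MFPS}; note the background section of the paper states this formula with the $\log_2$ accidentally dropped -- your version is the correct one), after which the same key inequality $\max_x p(z\mid x)\le e^\epsilon p(z\mid x_0)$ and summation finish the job. The core step is thus identical in both proofs; the difference is the scaffolding. Your route buys modularity and brevity, but it rests on the nontrivial external theorem that the uniform prior achieves capacity and that capacity equals the log of the sum of column maxima, whereas the paper's lemma needs only Bayes' theorem and is self-contained, which is why it can be stated (and reused) for any prior without appeal to the capacity result. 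One micro-remark: when $x_z^*=x_0$ the two inputs are equal rather than adjacent, so the differential-privacy inequality does not literally apply, but the bound holds trivially there since $e^\epsilon\ge 1$; it is worth a parenthetical in a polished write-up.
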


Note that this bound is stronger than the one of
Theorem~\ref{theo:bound-leakage}. In particular, it depends only on $\epsilon$
and not on $u,v$.

\section{Utility}
\label{sec:utility}


	As discussed in Section~\ref{section:model}, the utility of a randomized function $\mathcal{K}$ is the correlation between the real answers $Y$ for a query and the reported answers $Z$. In this section we analyze the utility $\calu(Y,Z)$ using the classic notion of \emph{utility functions} (see for instance \cite{Bernardo:94:BOOK}). 

	
	\commentMS{Discuss if the user can repeat the query. Dwork says that in principle repeated queries should give the same answer, which validates this reasoning. However, we don't know if it's feasible and it might not be a reasonable assumption.}
	
For our analysis we assume an oblivious randomization mechanism. As
discussed in Section~\ref{section:model}, in this case the system can be
decomposed into two channels, and the utility becomes a property of the channel associated to the randomization mechanism $\calh$ which maps the real
answer $y \in \mathcal{Y}$ into a reported answer $z \in \mathcal{Z}$ according
to given probability distributions $p_{Z|Y}(\cdot|\cdot)$. However, the user does not necessarily take $z$ as her guess for the real answer, since she can use some Bayesian post-processing to maximize the probability of success, i.e. a right guess. Thus for each reported answer $z$ the user can remap her guess to a value $y' \in \mathcal{Y}$ according to a remapping function $\rho(z) : \mathcal{Z} \rightarrow \mathcal{Y}$, that maximizes her expected gain. For each pair $(y,y')$, with $y\in \caly, y'=\rho(y)$, there is an associated value given by a gain (or utility) function $g(y,y')$ that represents a score of how useful it is for the user to guess the value $y'$ as the answer when the real answer is $y$.

It is natural to define the global utility of the mechanism $\cal H$   as the expected gain:

	\begin{equation}
		\mathcal{U}(Y,Z) = \displaystyle \sum_{y} p(y) \sum_{y'} p(y'|y) g(y,y') 
		\label{eq:gain}
	\end{equation}
	
\noindent{where $p(y)$ is the prior probability of real answer $y$, and $p(y'|y)$ is the probability of user guessing $y'$ when the real answer is $y$.}
	
We can derive the following characterization of the utility.  We use  $\delta_x$ to represent  the probability distribution which has value $1$ on $x$ and $0$ elsewhere.
\begin{align*}
	\mathcal{U}(Y,Z)  & =    \displaystyle \sum_{y} p(y) \sum_{y'} p(y'|y) g(y,y') & \text{(by \eqref{eq:gain})} \\
			     							    & =    \displaystyle \sum_{y} p(y) \sum_{y'} \left( \sum_{z} p(z|y)p(y'|z) \right) g(y,y') \\
			     							    & =    \displaystyle \sum_{y} p(y) \sum_{y'} \left( \sum_{z} p(z|y) \delta_{\rho(z)}(y') \right) g(y,y') & \text{(by remap $y'= \rho(z)$)} \\
			     							    & =    \displaystyle \sum_{y} p(y) \sum_{z} p(z|y) \sum_{y'} \delta_{\rho(z)}(y')g(y,y') \\
			     							    & =    \displaystyle \sum_{y,z} p(y,z) \sum_{y'} \delta_{\rho(z)}(y')g(y,y')  \\
			     							    & =    \displaystyle \sum_{y,z} p(y,z) g(y,\rho(z)) 
\end{align*}
 
A very common utility function is the \emph{binary gain function}, which is  defined as $g_{\rm bin}(y,y') = 1$ if $y=y'$ and $g_{\rm bin}(y,y') =0$ if $y \neq y'$. 
The rationale behind this function is that, when the answer domain does not have a notion of distance, then the wrong answers are all equally bad. Hence the gain is  total   when we guess the exact answer, and is $0$ for all other guesses.
Note that if the answer domain  is equipped with a notion of distance, 
then the gain function could take into account the proximity of the reported answer to the real one, the idea being that a close answer, even if  wrong, is better than a distant one.

In this paper we do not assume a notion of distance, and we will focus on the binary case. 
The use of binary utility functions in the context of
differential privacy was also investigated  
in \cite{Ghosh:09:STC}\footnote{\rev{Instead of gain functions, \cite{Ghosh:09:STC} equivalently 
uses the dual notion of \emph{loss functions}}.}.

By substituting $g$ with $g_{\rm bin}$ in the above formula   we obtain:
	
	\begin{equation}
		\mathcal{U}(Y,Z) = \displaystyle \sum_{y,z} p(y,z) \delta_{y}(\rho(z))
		\label{eq:utility-function}
	\end{equation}
\noindent{which tells us that the expected utility is the greatest when $\rho(z) = y$ is chosen to maximize $p(y,z)$. Assuming that the user chooses such a maximizing remapping, we have:}
	
	\begin{equation}
		\begin{array}{lcll}
			\displaystyle \mathcal{U}(Y,Z) & =    & \displaystyle \sum_{z} \max_{y} p(y,z) & \mbox{} \\
		\end{array}
		\label{eq:utility-renyi}
	\end{equation}
This corresponds to the converse of the Bayes risk, and it is closely related to the conditional min-entropy and to the min-entropy leakage: 
\[H_\infty(Y|Z) = -\log_2 \mathcal{U}(Y,Z) \qquad\qquad I_\infty(Y;Z) = H_\infty(X) + \log_2 \mathcal{U}(Y,Z)\]

\subsection{A bound on the utility}

In this section we show that the fact that $\cal K$ provides $\epsilon$-differential privacy induces a bound on the utility. 
We start by extending the adjacency relation $\sim$ from the datasets $\calx$ to the answers $\cal Y$. 
Intuitively, the function $f$ associated to the query determines a partition on
the set of all databases ($\cal X$, i.e. ${\it Val}^u$), and we say that two classes are adjacent if 
they contain an adjacent pair. More formally: 

\begin{definition}
Given $y,y'\in{\cal Y}$, with $y\neq y'$, we say that $y$ and $y'$ are adjacent
(notation $y\sim y'$), iff there exist $D,D'\in {\it Val}^u$ with $D\sim D'$ such that 
$ y = f(D)$ and  $y'=f(D')$.
\end{definition}
Since $\sim$ is symmetric on databases, it is also symmetric on $\caly$, therefore also $({\cal Y},\sim)$ forms an undirected graph.
\begin{definition}
	\label{def:distance-border}
	The \emph{distance} $\mathit{dist}$ between two elements $y,y'\in{\cal Y}$ is the length of the minimum path from $y$ to $y'$.
	For a given natural number $d$, we define ${\it Border}_{d}(y)$ as the set of elements at distance $d$ from $y$:
	\[{\it Border}_{d}(y)=\{y'\mid {\it dist}(y,y')=d\}\]
\end{definition}
We recall that a graph automorphism is a permutation of its vertices that preserves its edges. If $\sigma$ is a permutation of $S$ then an orbit of $\sigma$ is a set of the form $\{\sigma^i(s)\,|\,i \in\mathbb{N}\}$ where $s \in S$. A permutation has a single orbit iff $\{ \sigma^i(s) | i \in \mathbb{N} \} = S$ for all $s\in S$. 

\rev{The next theorem provides a bound on the utility in the case in which $({\cal Y},\sim)$ admits a graph automorphism with a single orbit. Note that this condition implies that the  graph has a very regular structure; in particular, all nodes must have the same number of incident edges. Examples of such graphs are rings and cliques (but they are not the only cases). }

\begin{theorem}\label{theorem:utility-bound}
Let $\cal H$ be a randomization mechanism for the randomized function $\cal K$ and the query $f$, and assume that $\cal K$ provides $\epsilon$-differential privacy. 
Assume that $({\cal Y},\sim)$ admits a graph automorphism with a single orbit. 
Furthermore, assume that there exists a natural number $c$ and an element $y\in{\cal Y}$ such that, \rev{for every natural number $d>0$}, either $|{\it Border}_{d}(y)| = 0$ or $|{\it Border}_{d}(y)| \geq c$. Then
\[
\calu(X,Y) \leq \frac{(e^\epsilon)^n(1-e^\epsilon)}{(e^\epsilon)^n(1-e^\epsilon) + c\,(1-(e^\epsilon)^n)}
\]
where $n$ is the maximum distance from $y$ in $\cal Y$.
\end{theorem}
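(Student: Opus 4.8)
The plan is to bound $\calu(Y,Z) = \sum_z \max_y p(y,z)$ by exploiting the differential-privacy constraint on the matrix entries together with the symmetry forced by the single-orbit automorphism. First I would fix the element $y$ and the maximum distance $n$ from the statement, and use the automorphism with a single orbit to argue that the ``worst case'' (and hence the extremal mechanism) can be taken to be symmetric: because one automorphism already acts transitively on $\caly$, every vertex looks like every other, so the bound established for $y$ transfers uniformly to all answers. This reduces the problem to understanding a single column-maximization centered at $y$, and justifies taking the uniform prior on $\caly$ (recall that $I_\infty$, and hence $\calu$, is maximized at the uniform distribution per \cite{Braun:09:MFPS,Smith:09:FOSSACS}).

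Next I would set up the optimization explicitly. Writing $p(y,z) = p(y)\,p(z\mid y)$ with $p(y)$ uniform, the quantity to maximize is governed entirely by the conditional matrix $p(z\mid y)$, which must satisfy the differential-privacy inequalities $p(z\mid y) \le e^\epsilon\, p(z\mid y')$ whenever $y\sim y'$. Iterating this along a shortest path shows that for answers at graph-distance $d$ the probabilities can differ by at most a factor $(e^\epsilon)^d$, i.e. each matrix entry is at least $a/(e^\epsilon)^{\mathit{dist}}$ where $a$ is the column maximum — exactly the structure highlighted in Example~\ref{exa:eyes}. The key combinatorial input is the $\mathit{Border}_d(y)$ hypothesis: for the column peaked at $y$, there is one entry equal to $a$ (at distance $0$), and at each distance $d>0$ there are at least $c$ entries each at least $a/(e^\epsilon)^d$, up to the maximum distance $n$. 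Summing the row (or using that each row of the stochastic matrix sums to $1$) forces
\[
1 \;\ge\; a\sum_{d=0}^{n} \frac{N_d}{(e^\epsilon)^d}
\;\ge\; a\Bigl(1 + c\sum_{d=1}^{n}(e^\epsilon)^{-d}\Bigr),
\]
where $N_d = |\mathit{Border}_d(y)|$, which bounds the achievable column maximum $a$ from above. Evaluating the geometric sum $\sum_{d=1}^{n}(e^\epsilon)^{-d}$ in closed form and simplifying yields precisely $a \le \tfrac{(e^\epsilon)^n(1-e^\epsilon)}{(e^\epsilon)^n(1-e^\epsilon) + c\,(1-(e^\epsilon)^n)}$, and since $\calu$ is a sum of such column maxima over a set on which the automorphism acts transitively, the same per-column bound controls the total.

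\textbf{Main obstacle.}
The delicate step is the reduction to the symmetric/extremal configuration: I must argue that the single-orbit automorphism lets me assume, without loss of generality, that all columns are ``rotations'' of one another so that the maximum of each column sits at a distinct answer and the row-sum constraint applies cleanly to a single representative column. Establishing this rigorously requires showing that an optimal mechanism can be averaged (or permuted) over the automorphism group generated by $\sigma$ without decreasing utility — essentially a convexity/symmetrization argument — and that under this symmetrization the $\mathit{Border}_d$ lower bounds on the number of entries at each distance become tight enough to drive the geometric-sum estimate. Handling the interaction between the distance structure on $\caly$, the requirement $|\mathit{Border}_d(y)|\ge c$ (rather than exactly $c$), and the direction of the inequalities is where the care is needed; the subsequent algebra reducing the bound to the stated closed form is routine.
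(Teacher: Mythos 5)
Your strategy coincides with the paper's own proof: symmetrize the mechanism's matrix, then combine the differential-privacy lower bound $M_{i,j}\ge a/(e^\epsilon)^{\mathit{dist}(i,j)}$ (you have the direction of this inequality right) with the row-sum constraint and the border cardinalities, and finish with the geometric sum; your closed-form algebra is correct. The step you flag as the main obstacle is resolved in the paper exactly in the way you propose: Lemma~\ref{lemma:collapse} first collapses columns so that the matrix becomes square, indexed by $\caly$ on both sides, with every column maximum on the diagonal, and Lemma~\ref{lemma:automorphism} then replaces $M$ by the average over the powers of the single-orbit automorphism, $M'_{h,k}=\frac{1}{|\caly|}\sum_{i=0}^{|\caly|-1}M_{\sigma^i(h),\sigma^i(k)}$, verifying that this preserves stochasticity, $\epsilon$-differential privacy, the maxima-on-diagonal property, and the sum of the column maxima (hence the uniform-prior utility). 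So nothing in your outline would fail, but note that the collapse step must precede the averaging: your symmetrization only makes sense once rows and columns are indexed by the same set and each column's maximum sits on the diagonal, a point your sketch passes over.

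One claim in your proposal is genuinely wrong and should be removed: that $I_\infty$ being maximized at the uniform prior implies $\calu$ is maximized at the uniform prior. It does not, because $I_\infty(Y;Z)=H_\infty(Y)+\log_2\calu(Y,Z)$ and $H_\infty(Y)$ also varies with the prior; concretely, a point-mass prior on any $y_0$ gives $\calu(Y,Z)=\sum_z p(z\mid y_0)=1$, which exceeds the stated bound. Hence the theorem cannot hold for arbitrary priors and must be read---as the paper's proof implicitly does, since its supporting lemmas assume a uniform input distribution---as a bound on the utility computed with the uniform prior on $\caly$. Your proof should simply fix that prior at the outset rather than invoke this reduction; with that change, and with the symmetrization actually carried out as above, your argument is the paper's.
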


\rev{
The bound provided by the above theorem is strict in the sense that for every $\epsilon$  and ${\cal Y}$ there exist an adjacency relation  $\sim$ for which we can construct a randomization mechanism $\cal H$  that  provides $\epsilon$-differential privacy and whose utility achieves the bound of Theorem~\ref{theorem:utility-bound}. This randomization mechanism is therefore optimal, in the sense that it provides the maximum possible utility 
for the given $\epsilon$.
Intuitively, the condition on $\sim$ is that    $|{\it Border}_{d}(y)|$ must be exactly $c$ or $0$  for every $d>0$. 
In the next section we will define formally such an optimal randomization mechanism, and give examples of queries that determine a relation $\sim$ satisfying the condition. 
\subsection{Constructing an optimal randomization mechanism}
Assume $f:\calx\rightarrow \caly$, and consider the graph structure $(\caly,\sim)$ determined by $f$.  Let $n$ be the maximum distance between two nodes in the graph and let $c$ be an integer. 
We construct the matrix $M$ of conditional probabilities associated to $\cal H$ as follows. For every column $z \in \calz$ and every row $y \in \caly$, define:
\begin{equation}
\label{eqn:optimal-matrix}
p_{Z|Y}(z|y)  =  {\alpha}/{(e^\epsilon)^d}   \mbox{ where } d = \mathit{dist}(y,z)\mbox{ and }
\alpha  = \frac{\scriptstyle{(e^\epsilon)^n(1-e^\epsilon)}}{\scriptstyle{(e^\epsilon)^n(1-e^\epsilon) + 
		c\,(1-(e^\epsilon)^n)}}
\end{equation}
}

The following theorem guarantees that the randomization mechanism ${\cal H}$ defined above is well defined and optimal, under certain conditions.

\begin{theorem}
\label{theorem:construction-h}
Let $f:\calx\rightarrow \caly$ be a query and  let $\epsilon\geq 0$. Assume that $({\cal Y},\sim)$  admits a graph automorphism with a single orbit, and that there exists $c$ such that, for every $y\in \caly$ and \rev{every natural number $d>0$}, either $|{\it Border}_{d}(y)| = 0$ or $|{\it Border}_{d}(y)| = c$. Then, for such $c$,  the definition  in (\ref{eqn:optimal-matrix}) determines a legal channel matrix for $\cal H$, i.e.,  for each $y \in \caly$, $p_{Z|Y}(\cdot|y)$ is a probability distribution. Furthermore,   the composition  $\cal K$ of $f$ and $\cal H$ provides $\epsilon$-differential privacy. Finally, $\cal H$ is optimal in the sense that it maximizes utility when the  distribution of $Y$ is uniform. 
\end{theorem}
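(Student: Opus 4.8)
The plan is to establish the three claims of the theorem in turn: (i) that \eqref{eqn:optimal-matrix} defines a legal stochastic matrix, (ii) that the induced $\calk$ is $\epsilon$-differentially private, and (iii) that $\calh$ is optimal for the uniform prior. The observation that ties everything together is that the closed form of $\alpha$ in \eqref{eqn:optimal-matrix} is \emph{exactly} the utility bound of Theorem~\ref{theorem:utility-bound}; hence optimality will follow for free once we show that our mechanism attains that bound.

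For legality, fix a row $y$ and sum its entries: $\sum_{z} p_{Z|Y}(z|y) = \alpha \sum_{d\geq 0} |{\it Border}_{d}(y)|\,(e^\epsilon)^{-d}$. The single-orbit automorphism $\sigma$ is a cyclic permutation of $\caly$ whose powers realise every vertex; since automorphisms preserve the edge relation and hence $\mathit{dist}$, we get $|{\it Border}_{d}(\sigma^k(y))| = |{\it Border}_{d}(y)|$, so the border profile is the same for every row. Moreover $(\caly,\sim)$ is connected, being the image under $f$ of the connected Hamming graph $({\it Val}^u,\sim)$: any walk between preimages projects to a walk in $\caly$. Connectedness forces every distance $1,\dots,n$ to be realised, so the hypothesis gives $|{\it Border}_{d}(y)| = c$ for $1\le d\le n$ and $|{\it Border}_{0}(y)|=1$. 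The row sum becomes $\alpha\,\bigl(1 + c\sum_{d=1}^{n}(e^\epsilon)^{-d}\bigr)$, a finite geometric series, and a direct computation shows this equals $1$ precisely for the stated $\alpha$. Nonnegativity holds since $\alpha>0$ when $\epsilon>0$ (numerator and denominator share a sign); the degenerate case $\epsilon=0$ yields the uniform matrix ($\alpha\to 1/|\caly|$) and is checked separately.

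For $\epsilon$-differential privacy, recall the mechanism is oblivious, so $p(z\mid D)=p_{Z|Y}(z\mid f(D))$. For adjacent $D\sim D'$ we have $f(D)=y$, $f(D')=y'$ with either $y=y'$ (ratio $1\le e^\epsilon$) or $y\sim y'$, i.e.\ $\mathit{dist}(y,y')=1$. In the latter case the ratio is $p_{Z|Y}(z|y)/p_{Z|Y}(z|y') = (e^\epsilon)^{\mathit{dist}(y',z)-\mathit{dist}(y,z)}$, and the triangle inequality gives $|\mathit{dist}(y',z)-\mathit{dist}(y,z)|\le \mathit{dist}(y,y')=1$, so the ratio lies in $[e^{-\epsilon},e^\epsilon]$, which is exactly $\epsilon$-differential privacy.

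Finally, for optimality under the uniform prior $p(y)=1/|\caly|$, the characterisation \eqref{eq:utility-renyi} gives $\calu(Y,Z) = \tfrac{1}{|\caly|}\sum_z \max_y p(z\mid y)$. Since $e^\epsilon\ge 1$, each entry $\alpha\,(e^\epsilon)^{-\mathit{dist}(y,z)}$ is maximised at $\mathit{dist}(y,z)=0$, i.e.\ at $y=z$, so $\max_y p(z\mid y)=\alpha$; summing over the $|\caly|$ columns yields $\calu(Y,Z)=\alpha$. Because the present (stronger) hypotheses imply those of Theorem~\ref{theorem:utility-bound}, no $\epsilon$-differentially private mechanism can exceed $\alpha$, so $\calh$ is optimal. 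I expect the main obstacle to be the legality step: justifying that the border profile is constant across rows and that every intermediate distance is realised, together with the correct handling of the $\epsilon=0$ boundary case, since the remaining two parts reduce cleanly to the triangle inequality and to the already-established utility bound.
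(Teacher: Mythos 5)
Your proposal is correct and follows essentially the same route as the paper: you compute the utility of the constructed matrix to be exactly $\alpha$ by running the row-sum argument of Theorem~\ref{theorem:utility-bound} with $|{\it Border}_{d}(y)| = c$ as an equality, and optimality is then immediate since the present hypotheses imply those of Theorem~\ref{theorem:utility-bound}. Your write-up is in fact more complete than the paper's own (very terse) proof, which leaves implicit both the legality details you supply (constant border profile via the single-orbit automorphism, realization of every distance $1,\dots,n$ via connectedness of $(\caly,\sim)$, and the degenerate $\epsilon=0$ case) and the $\epsilon$-differential-privacy verification, for which your triangle-inequality argument on $\mathit{dist}$ is the right justification.
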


The conditions for the construction of the optimal matrix are strong, but there are some interesting cases in which they are satisfied. 
Depending on the degree of connectivity $c$, \rev{we can have several different cases whose extremes are:}
\begin{itemize}
\item $({\cal Y},\sim)$ is a \emph{ring}, i.e. every element has exactly two adjacent elements. This is similar to the case of the counting queries considered in \cite{Ghosh:09:STC}, with the difference that our ``counting'' is in arithmetic modulo $|{\cal Y}|$. 
\item $({\cal Y},\sim)$ is a \emph{clique}, i.e. every element has exactly $|{\cal Y}|-1$ adjacent elements. 
\end{itemize}

\begin{remark}\label{rem:relax1}
Note that when we have a ring with an even number of nodes the conditions of Theorem~\ref{theorem:construction-h}
are almost met, except that  $|{\it Border}_{d}(y)| = 2$ for $d<n$, and $|{\it Border}_{d}(y)| = 1$ for $d=n$, where $n$ is the maximum distance between two nodes in $\caly$.
In this case, and if $(e^\epsilon)^2\geq 2$, we  can still construct a legal matrix by doubling the value of such elements. Namely, by defining
\[
p_{Z|Y}(z|y) = 2 \frac{\alpha}{(e^\epsilon)^n}  \qquad \mbox{if }  \mathit{dist}(y,z)=n
\]
For all the other elements the definition remains as in (\ref{eqn:optimal-matrix}).
\end{remark}

\begin{remark}\label{rem:relax2}
Note that our method can be applied also when the conditions of 
Theorem~\ref{theorem:construction-h} are not met: We can always  
add ``artificial'' adjacencies to the graph structure so to meet those conditions. 
Namely, for computing the distance in (\ref{eqn:optimal-matrix}) we use, 
instead of $(\caly,\sim)$, a structure $(\caly,\sim')$ which satisfies the conditions of 
Theorem~\ref{theorem:construction-h}, and such that $\sim\, \subseteq \, \sim'$. 
Naturally, the matrix constructed in this way provides $\epsilon$-differential privacy, but in general is not optimal. 
Of course, the smaller $\sim'$ is, the \rev{higher} is the utility. 
\end{remark}

The  matrices generated by our algorithm above can be very different, depending on the value  of $c$. 
The next two examples illustrate queries that give rise to the clique and to the ring structures, and show the corresponding matrices. 

\begin{example} Consider a database with electoral information where rows corresponds to voters. Let us assume, for simplicity, that each row contains only three fields:

\begin{itemize}
\item ID: a unique (anonymized) identifier assigned to each voter;
\item CITY: the name of the city where the user voted;
\item CANDIDATE: the name of the candidate the user voted for.
\end{itemize}

Consider the query \emph{``What is the city with the greatest number of votes for a given candidate?''}. For this query the binary function is a natural choice for the gain function: only the right city gives some gain, and any wrong answer is just as bad as any other. 

It is easy to see that every two answers are neighbors, i.e. \emph{the graph structure of the answers is a clique}. 

Consider the case where CITY=\{A,B,C,D,E,F\} and assume for simplicity that there is a unique answer for the query, i.e., there are no two cities with exactly the same number of individuals voting for a given candidate. Table~\ref{tab:city-cand-geo} shows two alternative mechanisms providing $\epsilon$-differential privacy (with $\epsilon = \log 2$). The first one, $M_{1}$, is based on the truncated geometric mechanism method used in \cite{Ghosh:09:STC} for counting queries (here extended to the case where every two answers are neighbors). The second mechanism, $M_{2}$, is the one we propose in this paper.

\begin{table}[tb]
\centering	
	\subtable[$M_{1}$: truncated geometric mechanism]{	
		$
			\begin{array}{|c||c|c|c|c|c|c|}
				\hline
				\mbox{In/Out} & A    & B    & C    & D    & E    & F    \\ \hline \hline
				A & 0.535 & 0.060 & 0.052 & 0.046 & 0.040 & 0.267 \\ \hline
				B & 0.465 & 0.069 & 0.060 & 0.053 & 0.046 & 0.307 \\ \hline
				C & 0.405 & 0.060 & 0.069 & 0.060 & 0.053 & 0.353 \\ \hline
				D & 0.353 & 0.053 & 0.060 & 0.069 & 0.060 & 0.405 \\ \hline
				E & 0.307 & 0.046 & 0.053 & 0.060 & 0.069 & 0.465 \\ \hline
				F & 0.267 & 0.040 & 0.046 & 0.052 & 0.060 & 0.535 \\ \hline
			\end{array}
		$
		\label{tab:city-cand-geo-a}
	}
	\subtable[$M_{2}$: our mechanism]{		
		$
			\begin{array}{|c||c|c|c|c|c|c|}
			 \hline
			 \mbox{In/Out} & A    & B    & C    & D    & E    & F    \\ \hline \hline
			 A             & 2/7  & 1/7 & 1/7 & 1/7 & 1/7 & 1/7 \\ \hline
			 B             & 1/7  & 2/7 & 1/7  & 1/7 & 1/7 & 1/7 \\ \hline
			 C             & 1/7 & 1/7 & 2/7  & 1/7  & 1/7 & 1/7 \\ \hline
			 D             & 1/7  & 1/7 & 1/7  & 2/7  & 1/7  & 1/7  \\ \hline
			 E             & 1/7 & 1/7 & 1/7 & 1/7  & 2/7 & 1/7 \\ \hline
			 F             & 1/7 & 1/7 & 1/7 & 1/7 & 1/7 & 2/7  \\ \hline
			\end{array}
		$
		\label{tab:city-cand-geo-b}
	}
	\caption{Mechanisms for the city with higher number of votes for a given candidate}
	\vspace{-0.70cm}
	\label{tab:city-cand-geo}
\end{table}

Taking the input distribution, i.e. the distribution on $Y$, as the uniform distribution, it is easy to see that ${\cal U}(M_1) = 0.2243 < 0.2857 = {\cal U}(M_2)$. Even for non-uniform distributions, our mechanism still provides better utility. For instance, for $p(A) = p(F) = 1/10$ and $p(B) = p(C) = p(D) = P(E) = 1/5$, we have ${\cal U}(M_1) = 0.2412 < 0.2857 = {\cal U}(M_2)$. This is not too surprising: the Laplacian method and the geometric mechanism work  very well when the domain of answers is provided with a metric and the utility function takes into account the proximity of the reported answer to the real one. It also works well when $({\cal Y}, \sim)$ has low connectivity, in particular in the cases of a ring and of a line. But in this example, we are not in these cases, because we are considering \emph{binary gain functions} and \emph{high connectivity}.

\end{example}

\begin{example}

Consider the same database as the previous example, but now assume a counting query of the form \emph{``What is the number of votes for candidate $\mathit{cand}$?''}. 
It is easy to see that each answer has at most two neighbors. More precisely,\emph{ the graph structure on  the answers is a line}. 
For illustration purposes, let us assume that only $5$ individuals have participated in the election. Table~\ref{tab:count-geo} shows two alternative mechanisms providing $\epsilon$-differential privacy ($\epsilon = \log 2$): (a) the truncated geometric mechanism  $M_{1}$ proposed in \cite{Ghosh:09:STC} and (b) the mechanism $M_2$ that we propose, where $c=2$ and $n=3$. Note that in order to apply our method we have first to apply Remark~\ref{rem:relax2} to transform the line into a ring, and then Remark~\ref{rem:relax1} to 
handle the case of the elements at maximal distance from the diagonal.


Le us consider the uniform prior distribution. We see that the utility of $M_1$ is higher than the utility of $M_2$, in fact the first is $4/9$ and the second is $4/11$. 
This does not contradict our theorem, because our matrix is guaranteed to be optimal only in the case of a ring structure, not a line as we have in this example.
If the structure were a ring, i.e. if the last row were adjacent to the first one, then $M_1$ would not provide $\epsilon$-differential privacy. 
In case of a line as in this example,   the truncated geometric mechanism  has been proved optimal \cite{Ghosh:09:STC}. 
\begin{table}[tb]
	\centering
	\subtable[$M_{1}$: truncated $\frac{1}{2}$-geom. mechanism]{
		$
			\begin{array}{|c||c|c|c|c|c|c|}
				\hline
				\mbox{In/Out} & 0    & 1    & 2    & 3    & 4    & 5    \\ \hline \hline
				0                   & 2/3  & 1/6  & 1/12 & 1/24 & 1/48 & 1/48 \\ \hline
				1                   & 1/3  & 1/3  & 1/6  & 1/12 & 1/24 & 1/24 \\ \hline
				2                   & 1/6  & 1/6  & 1/3  & 1/6  & 1/12 & 1/12 \\ \hline
				3                   & 1/12 & 1/12 & 1/6  & 1/3  & 1/6  & 1/6  \\ \hline
				4                   & 1/24 & 1/24 & 1/12 & 1/6  & 1/3  & 1/3  \\ \hline
				5                   & 1/48 & 1/48 & 1/24 & 1/12 & 1/6  & 2/3  \\ \hline
			\end{array}
		$
		\label{tab:count-geo-a}
	}
	\subtable[$M_{2}$: our mechanism]{
		$
			\begin{array}{|c||c|c|c|c|c|c|}
			 	\hline
				 \mbox{In/Out} & 0    & 1    & 2    & 3    & 4    & 5    \\ \hline \hline
				 0          & 4/11  & 2/11 & 1/11 & 1/11 & 1/11 & 2/11 \\ \hline
				 1          & 2/11  & 4/11 & 2/11  & 1/11 & 1/11 & 1/11 \\ \hline
				 2          & 1/11  & 2/11 & 4/11  & 2/11  & 1/11 & 1/11 \\ \hline
				 3          & 1/11  & 1/11 & 2/11  & 4/11  & 2/11  & 1/11  \\ \hline
				 4          & 1/11  & 1/11 & 1/11 & 2/11  & 4/11  & 2/11  \\ \hline
				 5          & 2/11  & 1/11 & 1/11 & 1/11 & 2/11 & 4/11  \\ \hline
			\end{array}
		$
		\label{tab:count-geo-b}
	}
	\caption{Mechanisms for the counting query ($5$ voters)}
	\vspace{-0.70cm}
	\label{tab:count-geo}	
\end{table}
\end{example}

\section{Related work}
\label{section:related-work}
As far as we know, the first work to investigate the relation between
differential privacy and information-theoretic leakage \emph{for an
individual} was \cite{Alvim:10:TechRep}. In this work, a channel is relative to a given
database $x$, and the channel inputs  are all possible databases
adjacent to $x$. Two bounds on leakage were presented, one for the Shannon entropy, and one for the min-entropy. 
The latter corresponds to Theorem~\ref{theo:Renyi} in this paper (note that  \cite{Alvim:10:TechRep} is an unpublished report).

Barthe and K\"opf \cite{Barthe:11:CSF} were the first to investigates the (more
challenging) connection between differential privacy and the 
min-entropy leakage \emph{for the entire universe of possible databases}.
They consider only  the hiding of the \emph{participation} of
individuals in a database, which corresponds to the  case of $v=2$ in our setting. 
They consider the ``end-to-end differentially private mechanisms'',
which correspond to what we call $\cal K$ in
our paper, and propose, like we do, to interpret them as information-theoretic
channels. 
They provide a bound for the leakage, but  point out that it  is not tight in general, and show that there cannot be a
domain-independent bound, by proving that for any number of individual
$u$ the optimal bound must be at least a certain expression $f(u,\epsilon)$.
Finally, they show that the question of providing optimal upper bounds
for the leakage of $\cal K$  in
terms of rational functions of $\epsilon$ is decidable, and leave the actual
function as an open question. In our work we used rather different
techniques and found (independently) the same function $f(u,\epsilon)$ 
(the bound $B(u,v,\epsilon)$ in Theorem~\ref{theo:bound-leakage} for $v=2$), but 
 we  proved  that $f(u,\epsilon)$ is \rev{a bound, and therefore} the
optimal bound\footnote{When discussing our result with Barthe and
K\"opf, they said that they also conjectured that $f(u,\epsilon)$ is the
optimal bound.}.

Clarkson and Schneider also considered differential privacy as a case study of their proposal for quantification of integrity  \cite{Clarkson:11:TECHREP}. There, the authors analyzed database privacy conditions from the literature (such as differential privacy, $k$-anonymity, and $l$-diversity) using their framework for utility quantification. In particular, they studied the relationship between differential privacy and a notion of leakage (which is different from ours - in particular their definition is based on Shannon entropy) and they provided a tight bound on leakage. 

Heusser and Malacaria \cite{Heusser:09:FAST} were among the first to explore the application of information-theoretic concepts to databases queries. They proposed to model database queries as programs, which allows for statical analysis of the information leaked by the query.  However  \cite{Heusser:09:FAST}  did not attempt to relate information leakage to differential privacy.

In \cite{Ghosh:09:STC} the authors aimed at obtaining optimal-utility randomization mechanisms while preserving differential privacy. The authors proposed adding noise to the output of the query according to the geometric mechanism. Their framework is very interesting because it provides us with a general definition of utility for a randomization mechanism $M$ that captures any possible side information and preference (defined as a loss function) the users of $M$ may have. They proved that the geometric mechanism is optimal in the particular case of counting queries. Our results in Section \ref{sec:utility} do not restrict to counting queries, however we only consider the case of binary loss function.

\section{Conclusion and future work}
\label{section:conclusion}

An important question in statistical databases is how to deal with the trade-off between the privacy offered to the individuals participating in the database and the utility provided by the answers to the queries. In this work we proposed a model integrating the notions of privacy and utility in the scenario where differential-privacy is applied. We derived a strict bound on the information leakage of a randomized function satisfying $\epsilon$-differential privacy and, in addition, we studied the utility of oblivious differential privacy mechanisms. We provided a way to optimize utility while guaranteeing differential privacy, in the case where a binary gain function is used to measure the utility of the answer to a query. 

As future work, we plan to find bounds for more generic gain functions, possibly by using the Kantorovich metric to compare the a priori and a posteriori probability distributions on secrets.

\bibliographystyle{splncs}
\bibliography{short}

\newpage
\section*{Appendix}
\label{section:appendix}

\subsection*{Notation}

In the following we assume that $A$ and $B$ are random variables with carriers $\mathcal{A}$ and $\mathcal{B}$, respectively. Let $M$ be a channel matrix with input $A$ and output $B$. We recall that the matrix $M$ represents the conditional probabilities $p_{B|A}(\cdot|\cdot)$. More precisely, the element of $M$ at the intersection of  row $a\in\mathcal{A}$ and  column $b\in\mathcal{B}$ is $M_{a,b} = p_{B|A}(b|a)$.
Note that if  the matrix  $M$ and  the input random variable $A$ are given,  then the output random variable $B$ is completely determined by them, and we use the notation $B(M,A)$ to represent this dependency. We also use   $H_{\infty}^{M}(A)$ to represent the conditional min-entropy $H_{\infty}(A|B(M,A))$. 
Similarly, we use $I_{\infty}^{M}(A)$ to denote $I_{\infty}(A;B(M,A))$. 

We denote by $M[l \to k]$  the matrix obtained by ``collapsing'' the column $l$ into $k$, i.e.
	\[
		M[l\to k]_{i,j} =
		\begin{cases}
			M_{i,k} + M_{i,l}	& j = k \\
			0					& j = l \\
			M_{i,j}				& \textrm{otherwise}
		\end{cases}
	\]

Given a partial function $\rho: \mathcal{A} \rightarrow \mathcal{B}$, the image of $\mathcal{A}$ under $\rho$ is $\rho(\mathcal{A}) = \{ \rho(a) | a \in \mathcal{A}, \rho(a) \neq \bot \}$, where $\bot$ stands for  ``undefined''. 

In the proofs we  need to use several indices, hence we  typically use the letters $i,j,h,k,l$  to range over rows and columns (usually $i,h,l$  range over rows and $j, k$  range over columns). 
 Given a matrix $M$, we denote by $\maxj{M}{j}$ the maximum value of column $j$ over all rows $i$, i.e. $\maxj{M}{j} = \max_{i}M_{i,j}$ . 

\subsection*{Proofs}

For the proofs, it will be useful to consider matrices with certain symmetries. In particular, it will be useful to transform our matrices in square matrices 
having the property that the  elements of the diagonal contain the   maximum values of each column, and are all equal.  This is the purpose of the following two lemmata: the first one transforms a matrix into a square matrix with all the column maxima in the diagonal, and the second makes all the elements of the diagonal equal. Both transformations preserve $\epsilon$-differential privacy and min-entropy leakage. 

\paragraph{\large Leakage}
\ \\[2ex]
\noindent 
In this part we prove the results about the bounds on min-entropy leakage. In the following lemmata, we assume that $M$ has input $A$ and output $B$, and that $A$ has a uniform distribution.

\begin{lemma}
	\label{lemma:collapse}
	Given an $n\times m$ channel matrix $M$ with $n \leq m$, providing $\epsilon$-differential
	privacy for some $\epsilon \geq 0$, we can construct a square $n \times n$ channel matrix $M'$
	such that:
	\begin{enumerate}
		\item $M'$ provides $\epsilon$-differential privacy.
	    \item $M'_{i,i} = \maxj{M'}{i}$ for all $i\in\cala$, i.e. the diagonal contains the maximum values of the columns.
		\item $H_\infty^{M'}(A) = H_\infty^{M}(A)$.
	\end{enumerate}
\end{lemma}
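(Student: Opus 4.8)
The plan is to exploit the fact that, for a uniform input $A$, the conditional min-entropy $H_\infty^M(A)$ depends on $M$ only through the sum of its column maxima. With $p(a)=1/n$ one has $p(a,b)=\frac1n M_{a,b}$, $p(b)=\frac1n\sum_{a}M_{a,b}$ and $p(a\mid b)=M_{a,b}/\sum_{a'}M_{a',b}$, so that $p(b)\max_a p(a\mid b)=\frac1n\maxj{M}{b}$ for every column $b$, and hence $H_\infty^M(A)=-\log_2\!\big(\frac1n\sum_{b}\maxj{M}{b}\big)$. Thus property (3) is equivalent to requiring that the transformation $M\mapsto M'$ preserve the sum of the column maxima, while property (2) asks that after the transformation each column attain its maximum on the diagonal. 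The whole construction will be carried out by repeatedly applying the collapse operation $M[l\to k]$, so the first thing I would record is its effect on these quantities.

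For differential privacy (property (1)) collapsing is harmless: for adjacent rows $i,i'$ the defining inequality gives $M_{i,k}+M_{i,l}\le e^\epsilon(M_{i',k}+M_{i',l})$, so the merged column $k$ of $M[l\to k]$ still satisfies the bound, the remaining columns are untouched, and a permutation of columns (a mere relabelling of outputs) changes nothing. The key observation for the sum of maxima is that a collapse is \emph{lossless exactly when the two columns share a maximizing row}: if $M_{t,k}=\maxj{M}{k}$ and $M_{t,l}=\maxj{M}{l}$ for a common row $t$, then $(M[l\to k])_{t,k}=\maxj{M}{k}+\maxj{M}{l}$, and since $M_{i,k}+M_{i,l}\le\maxj{M}{k}+\maxj{M}{l}$ for every $i$, this value is the maximum of the new column. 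Hence the merged column contributes exactly $\maxj{M}{k}+\maxj{M}{l}$ and the total sum of column maxima is unchanged.

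With these two facts the construction is immediate. For each column $b$ I would choose a row $r(b)$ achieving $\maxj{M}{b}$ and group the columns according to the value of $r(b)$. Within each group all columns share the maximizing row, so I iteratively collapse every group into a single representative column; by the observation above each step preserves both $\epsilon$-differential privacy and the sum of column maxima, and the surviving column has its maximum precisely on row $r(b)$. Since $r(b)\in\cala$, there are at most $n$ nonempty groups, so after discarding the all-zero columns produced by the collapses we are left with at most $n$ columns, each attaining its maximum on a distinct row. Permuting the columns so that the representative of row $i$ occupies position $i$ — and, if fewer than $n$ rows arise as some $r(b)$, padding with all-zero columns in the missing diagonal positions, whose maximum $0$ sits trivially on the diagonal — yields a square $n\times n$ matrix $M'$. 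By construction $M'_{i,i}=\maxj{M'}{i}$, giving (2); differential privacy survives every step, giving (1); and the sum of column maxima is preserved, which by the opening computation gives $H_\infty^{M'}(A)=H_\infty^{M}(A)$, i.e. (3). Note that collapsing never alters row sums, so $M'$ is again a legal channel, and the hypothesis $n\le m$ ensures that the target dimension $n$ does not exceed the number of available columns, so the procedure is a genuine collapse rather than an invention of columns.

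The step I expect to require the most care is the second one: one must resist the temptation to collapse arbitrary columns (which in general strictly lowers the sum of maxima, and so strictly raises $H_\infty$), and instead isolate the precise condition — a shared maximizing row — under which a collapse is lossless, and check that this condition is maintained when several columns of a group are merged one after another. Everything else is routine bookkeeping: the min-entropy computation for uniform input, the trivial behaviour of the defining inequality under column merging, and the final permutation and padding that place the surviving maxima on the diagonal.
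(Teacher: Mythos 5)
Your proposal is correct and follows essentially the same route as the paper: the paper's sequential ``column by column'' procedure with the partial injection $\rho$ is exactly your grouping of columns by a shared maximizing row, with each group collapsed via $M[l\to k]$ into one representative, zero columns used to fill out the square matrix, and preservation of differential privacy and of the sum of column maxima (hence of $H_\infty$ under the uniform prior) established in the same way. The only difference is expository: you isolate as an explicit lemma the fact that a collapse is lossless precisely when the columns share a maximizing row, which the paper leaves implicit behind ``it is easy to see''.
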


\begin{proof}
		We first show that there exists an $n\times m$ matrix $N$ and an injective total function
	$\rho: \mathcal{A} \rightarrow \mathcal{B}$ such that:
	\begin{itemize}
		\item $N_{i,\rho(i)} = 
		\maxj{N}{\rho(i)}$
		for all $i \in \mathcal{A}$,
		\item $N_{i,j} = 0$ for all $j \in \mathcal{B} \backslash \rho(\mathcal{A})$ and all $i \in \mathcal{A}$.
	\end{itemize}
	We iteratively construct $\rho,N$ ``column by column'' via a sequence of approximating partial functions $\rho_s$ and matrices $N_s$  ($0\leq s \leq  m$).
	
	\begin{itemize}
		\item \emph{Initial step} ($s = 0$).\\
			Define $\rho_0(i) = \bot$ for all $i \in \mathcal{A}$ and $N_0 = M$. \\	
		\item \emph{$s^{th}$ step} ($1 \leq s \leq m$). \\
			Let $j$ be the $s$-th column and let
			$i \in \mathcal{A}$ be one of the rows containing the maximum
			value of column $j$ in $M$, i.e.  $M_{i,j} = \maxj{M}{j}$.
			There are two cases:
				\begin{enumerate}
					\item $\rho_{s-1}(i) = \bot$: we define
						\begin{flalign*}
							\rho_{s} &= \rho_{s-1} \cup \{ i \mapsto j \} & \\
							N_s &= N_{s-1} &
						\end{flalign*}
					\item \label{item:b} $\rho_{s-1}(i) = k \in \mathcal{B}$: we define
						\begin{flalign*}
							\rho_s &= \rho_{s-1} & \\
							N_s &= N_{s-1}[j \to k] &
						\end{flalign*}
				\end{enumerate} 
	\end{itemize}

	Since the first step assigns $j$ in $\rho_s$ and the second zeroes the column $j$ in $N_s$, all unassigned
	columns $\calb \setminus \rho_m(\cala)$ must be zero in $N_m$. We finish the construction by taking $\rho$ to be the same
	as $\rho_m$ after assigning to each unassigned row one of the columns in  $\calb \setminus \rho_m(\cala)$ (there are enough such
	columns since $n \le m$). We also take $N = N_m$. Note that by construction
	$N$ is a channel matrix.

	Thus we get a matrix $N$ and a function $\rho: \mathcal{A} \rightarrow
	\mathcal{B}$ which, by construction, is injective and satisfies
	$N_{i,\rho(i)} = \maxj{N}{\rho(i)}$ for all $i \in \mathcal{A}$, and
	$N_{i,j} = 0$ for all $j \in \mathcal{B} \backslash \rho(\mathcal{A})$ and
	all $i \in \mathcal{A}$. Furthermore, $N$ provides $\epsilon$-differential
	privacy because each column is a linear combination of columns of $M$. It is also
	easy to see that $\sum_{j} \maxj{N}{j} = \sum_{j} \maxj{M}{j}$, hence $H_\infty^{N}(A) = H_\infty^{M}(A)$ (remember that A has the uniform distribution).

	Finally, we create our claimed matrix $M'$ from $N$ as follows: first, we
	eliminate all columns in $\calb\setminus\rho(\cala)$. Note that all these
	columns are zero so the resulting matrix is a proper channel matrix,
	provides differential privacy and has the same conditional min-entropy. Finally, we
	rearrange the columns according to $\rho$. Note that the order of the
	columns is irrelevant, any permutation represents the same conditional
	probabilities thus the same channel. The resulting matrix $M'$ is
	$n\times n$ and has all maxima in the diagonal.
\end{proof}

\begin{lemma}
	\label{lemma:vu}
	Let $M$ be a channel with input and output alphabets $\cala = \calb = {\it Val}^u$,
	and let $\sim$ be the adjacency relation on $Val^u$ defined  in Section~\ref{section:model}.
	Assume that the maximum value of each column is on the diagonal, that is $M_{i,i} = \maxj{M}{i}$ for all $i\in \cala$.
	If $M$ provides $\epsilon$-differential privacy then	we can construct a new channel matrix $M'$ such that:
	\begin{enumerate}
		\item $M'$ provides $\epsilon$-differential privacy;
		\item $M'_{i,i} = M'_{h,h}$ for all  $i,h \in \cala$ i.e. all the elements
		of the diagonal are equal;
		\item $M'_{i,i} = \maxj{M'}{i}$ for all $i\in \cala$;
		\item $H_\infty^{M}(A) = H_\infty^{M'}(A)$.		
	\end{enumerate}
\end{lemma}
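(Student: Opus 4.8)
The plan is to exploit the strong symmetry of the graph $(\,{\it Val}^u,\sim)$. Endow ${\it Val}$ with the structure of the cyclic group $\mathbb{Z}_v$, so that $\cala = {\it Val}^u$ becomes the group $(\mathbb{Z}_v)^u$ under coordinate-wise addition. For each $t\in\cala$ the translation $\tau_t\colon x\mapsto x+t$ is a graph automorphism of $(\,{\it Val}^u,\sim)$, since two tuples differ in exactly one coordinate iff their $t$-translates do; moreover $\{\tau_t\}_{t\in\cala}$ acts transitively on $\cala$. The idea is to symmetrize $M$ by averaging over this transitive group, conjugating rows and columns \emph{simultaneously}. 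Concretely, I would set
\[
M'_{i,j} \;=\; \frac{1}{n}\,\smallsum{t\in\cala} M_{i+t,\,j+t}, \qquad n = v^u = |\cala|.
\]
Each summand is a nonnegative entry of $M$ and $\smallsum{j} M_{i+t,j+t}=1$ for every $t$, so $M'$ is a legal channel matrix; note that $M'_{i,j}$ depends only on the difference $j-i$, which is the origin of all the symmetry needed.

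Each of the four claims then follows by a term-by-term comparison. For claim~1, fix adjacent rows $i\sim i'$; translation invariance of $\sim$ gives $i+t\sim i'+t$ for every $t$, so $\epsilon$-differential privacy of $M$ yields $M_{i+t,j+t}\le e^\epsilon M_{i'+t,j+t}$, and summing preserves the inequality. For claim~3, the hypothesis that the diagonal of $M$ carries the column maxima gives, for each $t$, $M_{i+t,i+t}=\maxj{M}{i+t}\ge M_{h+t,i+t}$; averaging over $t$ shows $M'_{i,i}\ge M'_{h,i}$ for all $h$, i.e. the diagonal of $M'$ still carries the column maxima. For claim~2, the diagonal entry is $M'_{i,i}=\frac{1}{n}\smallsum{t}M_{i+t,i+t}=\frac{1}{n}\smallsum{k}M_{k,k}$, which is manifestly independent of $i$. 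Finally, for claim~4, recall that for the uniform input distribution $H_\infty^{M}(A)=-\log_2\frac{1}{n}\smallsum{j}\maxj{M}{j}$, so it suffices to check $\smallsum{j}\maxj{M'}{j}=\smallsum{j}\maxj{M}{j}$; by claim~3 the left-hand side equals $\smallsum{j}M'_{j,j}$, and by the computation in claim~2 this is $\smallsum{k}M_{k,k}=\smallsum{j}\maxj{M}{j}$.

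The only genuine insight is the first step: recognizing that $(\,{\it Val}^u,\sim)$ is a Cayley (Hamming) graph and hence vertex-transitive, so that averaging over the translation group simultaneously \emph{equalizes} the diagonal (which is exactly what claim~2 demands) while leaving every other required feature intact. Once the correct group action is identified, the verification is routine; the one point requiring care is that one must conjugate by $\tau_t$ on both rows and columns rather than permute rows alone, since only the symmetric relabeling preserves the property that the diagonal holds the column maxima. I would also note that any transitive group of graph automorphisms would serve equally well in place of the translations; the translations are simply the most convenient explicit choice, as they render $M'$ a circulant-type matrix whose entries depend only on $j-i$.
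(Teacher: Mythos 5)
Your proof is correct, and it takes a genuinely different route from the paper's. The paper symmetrizes $M$ by averaging over \emph{distance classes}: its new entry at $(h,k)$ is the average of all entries $M_{i,j}$ with $\mathit{dist}(i,j)=\mathit{dist}(h,k)$, suitably normalized by the border cardinalities $\binom{u}{d}(v-1)^d$. With that construction, properties 2--4 come easily, but preservation of $\epsilon$-differential privacy is the hard part: it needs a case analysis on $\mathit{dist}(h,k)$ versus $\mathit{dist}(h',k)$ and a counting argument (summing the privacy inequality over the sets of neighbours $\cals_{i,j}$ at distance $d+1$, each counted $d+1$ times) to compare sums over consecutive borders. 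Your construction averages instead over the translation group of $(\mathbb{Z}_v)^u$, i.e. over a transitive group of graph automorphisms, and this makes \emph{every} claim, differential privacy included, a term-by-term comparison; the combinatorics disappears entirely. In fact, your argument is precisely the symmetrization that the paper itself uses later, in Lemma~\ref{lemma:automorphism}, for the utility results --- except that the paper there requires a \emph{single} automorphism with a single orbit (a cyclic transitive group), which the Hamming graph does not obviously admit for $u\ge 2$, whereas you observe that any transitive automorphism group suffices. So your proof effectively derives Lemma~\ref{lemma:vu} from a natural generalization of Lemma~\ref{lemma:automorphism}, unifying the two lemmata. What the paper's heavier construction buys is extra structure in $M'$: its entries depend only on $\mathit{dist}(h,k)$, the same isotropy exhibited by the tight matrices of Proposition~\ref{prop:tight}; but that structure is not needed for the four stated properties nor for the proof of Theorem~\ref{theo:bound-leakage}, so your argument is a clean simplification.
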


\begin{proof}
	Let $k,l \in {\it Val}^u$. Recall that  $\mathit{dist}(k,l)$ (distance between $k$ and $l$) is   the length of the minimum $\sim$-path connecting $k$ and
	$l$ (Definition~\ref{def:distance-border}), i.e. the number of individuals in which
	$k$ and $l$ differ. Since $\cala=\calb={\it Val}^u$ we will use $\mathit{dist}(\cdot,\cdot)$ also between
	rows and columns. Recall also that ${\it Border}_{d}(h)= \{ k \in \calb | \mathit{dist}(h,k) = d \}$. 
	For typographical reasons,  in this proof we will use the   
	notation 
	$\calb_{h,d}$ to represent ${\it Border}_{d}(h)$, and $d(k,l)$ to represent $\mathit{dist}(k,l)$.

	Let $n = |\cala|=v^u$. The matrix $M'$ is given by
	\[
		M'_{h,k} = \frac{1}{n|\calb_{h,d(h,k)}|} \sum_{i\in\cala} 
			\sum_{j \in \calb_{i,d(h,k)}} M_{i,j}
	\]
	We first show that this is a well defined channel matrix, namely
	$ \sum_{k\in\calb} M'_{h,k}= 1$ for all $h \in \cala $. We have
	\begin{align*}
		\sum_{k\in\calb} M'_{h,k} &=
			\sum_{k\in\calb} 
				\frac{1}{n|\calb_{h,d(h,k)}|} \sum_{i\in\cala} \sum_{j \in \calb_{i,d(h,k)}} M_{i,j}
		\\ &=
				\frac{1}{n} \sum_{i\in\cala} \sum_{k\in\calb} \frac{1}{|\calb_{h,d(h,k)}|}\sum_{j \in \calb_{i,d(h,k)}} M_{i,j}
	\intertext{%
		Let $\Delta=\{0,\ldots,u\}$. Note that $\calb = \bigcup_{d\in \Delta}\calb_{h,d}$,
		and these sets are disjoint, so the summation over $k\in\calb$ can be split as follows
	}
		&=
				\frac{1}{n} \sum_{i\in\cala} 
					\sum_{d\in \Delta}\sum_{k\in\calb_{h,d}}  \frac{1}{|\calb_{h,d}|}\sum_{j \in \calb_{i,d}} M_{i,j}
		\\&=
				\frac{1}{n} \sum_{i\in\cala} 
					\sum_{d\in \Delta} \sum_{j \in \calb_{i,d}} M_{i,j} \sum_{k\in\calb_{h,d}}\frac{1}{|\calb_{h,d}|}
	\intertext{%
		as  $\sum_{k\in\calb_{h,d}}\frac{1}{|\calb_{h,d}|} =1$, we obtain 
	}
		&=
				\frac{1}{n} \sum_{i\in\cala} 
					\sum_{d\in \Delta} \sum_{j \in \calb_{i,d}} M_{i,j} 
	\intertext{%
		and now the summations over $j$ can be joined together
	}
		&=
				\frac{1}{n} \sum_{i\in\cala} 
					\sum_{j \in \calb} M_{i,j}
		= 1
	\end{align*}
	We now show that the elements of the diagonal have the intended properties.
	First, we show that the elements of the diagonal are all the same.
	We have that $\calb_{i,d(h,h)} = \calb_{i,0} =\{i\}$ for all $h\in\cala$, and therefore:
	\[
		M'_{h,h} = \frac{1}{n}\sum_{i\in \cala}M_{i,i}
	\]
	Then, we show that they are the maxima for each
	column. Note that $|\calb_{i,d}| = \binom{u}{d} (v-1)^{d}$ which is
	independent of $i$. We have:
	\begin{align*}
		M'_{h,k} 
			&= \frac{1}{n|\calb_{h,d(h,k)}|} \sum_{i\in\cala} \sum_{j \in \calb_{i,d(h,k)}} M_{i,j} \\
			&\le \frac{1}{n|\calb_{h,d(h,k)}|} \sum_{i\in\cala} \sum_{j \in \calb_{i,d(h,k)}} M_{i,i}
				& \textrm{($M$ has maxima in the diag.)} \\
			&= \frac{1}{n} \sum_{i\in\cala} \frac{|\calb_{i,d(h,k)}|}{|\calb_{h,d(h,k)}|} M_{i,i} \\
			&= \frac{1}{n} \sum_{i\in\cala} M_{i,i} = M'_{h,h}
	\end{align*}
	It easily follows that $\sum_{j} \maxj{M'}{j} = \sum_{j} \maxj{M}{j}$
	which implies that $H_\infty^{M}(A) = H_\infty^{M'}(A)$.

	It remains to show that $M'$ provides $\epsilon$-differential privacy, namely that
	\[
		M'_{h,k} \le e^\epsilon M'_{h',k} \qquad \forall h,h',k \in \cala :
		h \sim h'
	\]
	Since $d(h,h')=1$, by the triangular inequality we derive:
	\[
		d(h',k) - 1 \le d(h,k) \le d(h',k) + 1
	\]
	Thus, there are exactly 3 possible cases: 
	
	\begin{enumerate}
	
		\item $d(h,k)=d(h',k)$.\\
		The result is immediate since $M'_{h,k}=M'_{h',k}$. \\
		
		\item $d(h,k)=d(h',k)-1$.\\
		Define
		\[
			\cals_{i,j} = \{ j' \in \calb_{i,d(i,j)+1} | j' \sim j \}
		\]
		Note that $|\cals_{i,j}| = (u - d(i,j))(v-1)$ ($i$ and $j$ are equal in
		$u-d(i,j)$ elements, and we can change any of them in $v-1$ ways). The following
		holds:
		\begin{align}
			M_{i,j} &\le e^\epsilon M_{i,j'}	\qquad \forall j' \in \cals_{i,j}
				&&\text{(diff. privacy)}\Rightarrow   \notag\\
			(u - d(i,j))(v-1)M_{i,j} &\le e^\epsilon \sum_{j'\in \cals_{i,j}} M_{i,j'}
				&&\text{(sum of the above)}\Rightarrow \notag\\
			\sum_{j\in \calb_{i,d(h,k)}} (u - d(h,k))(v-1)M_{i,j}
				&\le  e^\epsilon \sum_{j\in \calb_{i,d(h,k)}} \sum_{j'\in \cals_{i,j}} M_{i,j'}
				&&\text{(sum over $j$)} \notag
		\intertext{%
			Let $d=d(h,k)$.
			Note that each $j' \in \calb_{i,d+1}$ is contained in exactly $d+1$
			different sets $\cals_{i,j}, j \in \calb_{i,d}$. So the right-hand side
			above sums all elements of $\calb_{i,d+1}$, $d+1$ times each. Thus
			we get
		}
			\label{eq1}
			(u - d)(v-1)\sum_{j\in \calb_{i,d}} M_{i,j}
				&\le e^\epsilon\ (d+1) \sum_{j\in \calb_{i,d+1}} M_{i,j}
		\end{align}
		Finally, we have
		\begin{align*}
			M'_{h,k} &= \frac{1}{n|\calb_{h,d}|} \sum_{i\in\cala} 
				\sum_{j \in \calb_{i,d}} M_{i,j}
					\\
			&\le e^\epsilon\ 
			\frac{1}{n\binom{u}{d}(v-1)^{d}}   
			\frac{d+1}{(u - d)(v-1)}
			\sum_{i\in\cala} 
				\sum_{j \in \calb_{i,d+1}} M_{i,j}
					&& \text{(from \eqref{eq1})}
					\\
			&\le 
			e^\epsilon\ 
			\frac{1}{n\binom{u}{d+1}(v-1)^{d+1}}    \sum_{i\in\cala} 
				\sum_{j \in \calb_{i,d+1}} M_{i,j}
				\\
			&= e^\epsilon M'_{h',k}
		\end{align*}\\
		
		\item $d(h,k)=d(h',k)+1$.\\
		Symmetrical to the case $d(h,k)=d(h',k)-1$.\\
	
	\end{enumerate}
	
\end{proof}

We are now ready to prove our first main result. 

\begin{theo} {\bf \ref{theo:bound-leakage}.}
	If $\mathcal{K}$ provides \emph{$\epsilon$-differential privacy} then the min-entropy leakage associated to $\mathcal{K}$ is bounded from above as follows:
	\[I_\infty(X;Z)  \leq u\, \log_2\frac{v\, e^\epsilon}{(v-1 +  e^\epsilon)}\]
\end{theo}

\begin{proof}
		
	Let us assume, without loss of generality, that $|{\cal X}| \leq |{\cal Z}|$  (if this is not the case, then we add enough zero columns, i.e. columns containing only $0$'s, so to match the number of rows. Note that adding zero columns does not change the min-entropy leakage). 

 For our proof we need a square matrix with all column maxima on the diagonal, and all equal. We obtain such a matrix by transforming the matrix associated to $\mathcal{K}$ as follows: first we  apply Lemma~\ref{lemma:collapse} to it (with $A=X$ and $B=Z$), and then we apply 
	Lemma~\ref{lemma:vu} to the result of  Lemma~\ref{lemma:collapse}. 
	The final matrix  $M$ has size $n\times n$, with $n = |{\cal X}| = v^u$,  provides $\epsilon$-differential privacy,  and for all rows $i,h$ 
	we have that $M_{i,i} =  M_{h,h}$ and $M_{i,i} = \maxj{M}{i}$. 
	  Furthermore, $I_\infty^M(X)$ is equal to the min-entropy leakage of $\mathcal{K}$. 

Let us denote by $\alpha$ the value of every element in the diagonal of $M$, i.e. $\alpha = M_{i,i}$ for every row $i$. Note that for every 
$j \in {\it Border}_{d}(i)$ (i.e. every $j$ at distance   $d$ from a given $i$) the  value of $M_{i,j}$ is at least $\frac{M_{i,i}}{(e^{ \epsilon})^d}$, hence 
$M_{i,j}\geq \frac{\alpha}{(e^{ \epsilon})^d}$. Furthermore each element $j$ at distance $d$ from $i$ can be obtained by changing the value  of $d$ individuals  in the $u$-tuple representing $i$. 
We can choose those $d$ individuals in $\binom{u}{d}$ possible ways, and for each of these individuals we can change the value (with respect to the one in $i$) in  $v-1$ possible ways. 
Therefore $|{\it Border}_{d}(i)| = \binom{u}{d} (v-1)^{d}$,  and we obtain:
\[
\displaystyle  \sum_{d=0}^{u} \binom{u}{d} (v-1)^{d} \frac{\alpha}{(e^{ \epsilon})^d} \, \leq  \, \sum_{j=1}^{n} M_{i,j} 
\]
Since each row represents a probability distribution, the elements of row $i$ must sum up to $1$. Hence:
\[
\displaystyle \sum_{d=0}^{u} \binom{u}{d} (v-1)^{d} \frac{\alpha}{(e^{ \epsilon})^d} \, \leq \,   1
\]
Now we apply some transformations:
	\begin{equation*}		
				\begin{array}{rcll}
				\displaystyle \sum_{d=0}^{u} \binom{u}{d} (v-1)^{d} \frac{\alpha}{(e^{ \epsilon})^d} & \;\leq \;& 1 & \; \iff \\
				\displaystyle \alpha \sum_{d=0}^{u} \binom{u}{d} (v-1)^{d} ((e^{ \epsilon})^d)^{u-d}& \; \leq\;& ({e^\epsilon})^{u} \\
			\end{array}
	\end{equation*}
Since $\displaystyle \alpha \sum_{d=0}^{u} \binom{u}{d} (v-1)^{d} ({e^\epsilon})^{u-d}\, = \, (v-1+e^\epsilon)^u$ (binomial expansion), we obtain:
	\begin{equation}		
		\label{eq:theorem:bl}
				\displaystyle \alpha \leq \left(\frac{{e^\epsilon}}{v-1+e^\epsilon}\right)^u
	\end{equation}
Therefore:	
	\begin{equation*}		
		\label{eq:}
		\begin{array}{lclll}
			\displaystyle I_{\infty}^{M}(X) & \,= \,&H_{\infty}(X) - H_{\infty}^{M}(X) & \quad&  \text{(by definition)} \\[2ex]
					& \,= \,&\displaystyle\log_{2} v^{u} + \log_2 \sum_j \alpha \frac{1}{n} &\\[2ex]
					& \,= \,&\log_{2} v^{u} + \log_2 \alpha  &\\[2ex]
					& \,\leq \,&\displaystyle\log_{2} v^{u} + \log_2 \left(\frac{{e^\epsilon}}{v-1+e^\epsilon}\right)^u  && \text{(by \eqref{eq:theorem:bl} )} \\[2ex]
					& \,= \,&\displaystyle u \log_{2} \frac{v\,{e^\epsilon}}{v-1+e^\epsilon}  & \\
		\end{array}
	\end{equation*}
	
\end{proof}
The next proposition shows that the bound obtained in previous theorem is tight.
\begin{prop} {\bf \ref{prop:tight}.}
	For every $u$, $v$, and $\epsilon$ there exists a  randomized function $\mathcal K$ 
	which provides $\epsilon$-differential privacy and whose min-entropy leakage, for the uniform input distribution, is $I_\infty(X;Z)=B(u,v,\epsilon)$.
\end{prop}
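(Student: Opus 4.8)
The plan is to exhibit the matrix that saturates every inequality used in the proof of Theorem~\ref{theo:bound-leakage}. Concretely, I would take $\calx = \calz = {\it Val}^u$ (a square $v^u \times v^u$ channel) and define
\[
p_{Z|X}(j \mid i) \; = \; \frac{\alpha}{(e^\epsilon)^{\mathit{dist}(i,j)}}
\qquad\text{with}\qquad
\alpha \; = \; \left(\frac{e^\epsilon}{v-1+e^\epsilon}\right)^{\!u},
\]
where $\mathit{dist}(i,j)$ is the number of individuals in which the $u$-tuples $i$ and $j$ differ (the Hamming distance underlying $\sim$). This is precisely the construction illustrated in Example~\ref{exa:eyes}: every entry sits exactly at the lower bound $\alpha/(e^\epsilon)^d$ that was only an inequality in the theorem's argument, so one expects all the theorem's estimates to become equalities.

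First I would check that this defines a legal channel matrix, i.e. that each row sums to $1$. Fixing a row $i$ and grouping the columns by distance, there are $\binom{u}{d}(v-1)^d$ columns at distance $d$ from $i$ (as computed in the proof of Theorem~\ref{theo:bound-leakage}), so the row sum equals $\alpha \sum_{d=0}^{u} \binom{u}{d}(v-1)^d (e^\epsilon)^{-d}$. By the binomial theorem this is $\alpha\,(1 + (v-1)/e^\epsilon)^u = \alpha\,(v-1+e^\epsilon)^u/(e^\epsilon)^u = 1$, by the choice of $\alpha$. This is exactly the binomial identity from the theorem, now read as an equality rather than a bound. Next I would verify $\epsilon$-differential privacy directly from the channel characterization: for adjacent rows $i\sim i'$ (differing in one individual) and any column $j$, the triangle inequality gives $|\mathit{dist}(i,j)-\mathit{dist}(i',j)|\le 1$, whence $p(j\mid i)/p(j\mid i') = (e^\epsilon)^{\mathit{dist}(i',j)-\mathit{dist}(i,j)}\le e^\epsilon$; since this holds for every column, the privacy condition follows.

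Finally I would evaluate the min-entropy leakage for the uniform input. Since $\alpha/(e^\epsilon)^d$ is strictly decreasing in $d$, each column maximum is attained on the diagonal ($d=0$) and equals $\alpha$; as there are $v^u$ columns, the leakage computation from the proof of Theorem~\ref{theo:bound-leakage} gives $I_\infty(X;Z)=\log_2 v^u + \log_2\alpha = u\log_2\frac{v\,e^\epsilon}{v-1+e^\epsilon}=B(u,v,\epsilon)$. Every step here is just the reverse reading of an estimate already established for the theorem, so there is no genuinely hard obstacle; the only point requiring care is confirming that the diagonal really does carry the column maxima, which is what makes the leakage formula collapse to $\log_2(v^u\alpha)$, and this is immediate from the monotonicity of $\alpha/(e^\epsilon)^d$ in $d$.
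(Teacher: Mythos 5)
Your proof is correct and is essentially the paper's own construction: the paper likewise sets $\calz=\calx$ and defines $p_{\mathcal K}(z|x)$ proportional to $(e^\epsilon)^{-\mathit{dist}(x,z)}$, leaving the normalization, the differential-privacy check, and the leakage computation as ``easy to see,'' which you carry out explicitly via the binomial identity, the triangle inequality, and the column-maxima argument. One remark: your normalization constant $\alpha=\left(\frac{e^\epsilon}{v-1+e^\epsilon}\right)^{u}$ is the correct one, whereas the paper's proof writes $B(u,v,\epsilon)$ itself in the numerator of the matrix entries --- evidently a slip, since that quantity is the leakage bound (a logarithm), not a normalizing constant --- so your version is in fact the repaired form of the same argument.
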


\begin{proof}
The adjacency relation in $\cal X$ determines a graph structure $G_{\cal X}$. 
Set $\cal{Z}=\cal{X}$ and define the matrix of $\mathcal K$  as follows: 
\[ p_{\mathcal K}(z|x) = \frac{B(u,v,\epsilon)}{(e^\epsilon)^d} \qquad\mbox{where $d$ is the distance between $x$ and $z$ in $G_{\cal X}$} \] 
It is easy to see that $p_{\mathcal{K}}(\cdot |x)$ is a probability distribution  for every $x$, that $\mathcal K$ provides $\epsilon$-differential privacy, and that  $I_\infty(X;Z)=B(u,v,\epsilon)$.
\end{proof}



We consider now the case in which $|\mathit{Range}({\mathcal{K})}|$ is bounded by a number smaller than $v^u$. 

In the following when we have a random variable $X$, and a matrix $M$ with row indices in $\cala\subsetneq\calx$,  we will use the notations  
$H_\infty^{M}(X)$ and $I_\infty^{M}(X)$ to represent the conditional min-entropy and leakage obtained by adding  ``dummy raws'' to $M$, namely rows that extend the input domain  	of the corresponding channel so to match the input   $X$,  but which do not contribute to the computation of $H_\infty^{M'}(X)$. Note that it is easy to extend $M$ this way: we only have to make sure that  for each  column $j$ the value of each of these new rows is dominated by $\maxj{M'}{j}$. 

We will also use the notation $\sim_u$ and $\sim_\ell$ to refer to the standard adjacency relations on $\mathit{Val}^u$ and  $\mathit{Val}^\ell$, respectively.

\begin{lemma}
	\label{lemma:new-bound}
	Let $\mathcal K$ be a randomized function with input $X$, where $\mathcal{X}=\mathit{Val}^\mathit{Ind}$, providing $\epsilon$-differential privacy.
	Asssume that $r = |\mathit{Range}({\mathcal{K})}|=v^\ell$, for some $\ell<u$. 
	Let $M$ be the matrix associated to $\mathcal K$. Then it is possible 
	to build a square matrix   $M'$  of size $v^\ell\times v^\ell$, with row and column indices in $\cala\subseteq\calx$,
	and a binary relation $\sim' \subseteq\cala\times \cala$ such that $(\cala, \sim')$ is isomorphic to $(\mathit{Val}^\ell, \sim_\ell)$, and such that:
	\begin{enumerate}
		\item $M'_{i,j}\leq (e^\epsilon)^{u-l + d}\, M'_{i,j}$ for all $i,j,k\in \cala$, where $d$ is the $\sim'$-distance between $j$ and $k$.
	         \item $M'_{i,i} = M'_{h,h}$ for all $i,h\in\cala$, i.e. elements of the diagonal are all equal
	         \item $M'_{i,i} = \maxj{M'}{i}$ for all $i\in\cala$, i.e. the diagonal contains the maximum values of the  columns.
                   \item $H_\infty^{M'}(X) = H_\infty^{M}(X)$. 
	\end{enumerate}
\end{lemma}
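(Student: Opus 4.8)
The plan is to prove this as the $r=v^\ell<v^u$ analogue of the combination of Lemmas~\ref{lemma:collapse} and~\ref{lemma:vu}: first reduce the $v^u\times v^\ell$ matrix $M$ to a square $v^\ell\times v^\ell$ matrix $N$ carrying all the column maxima on its diagonal, then symmetrize $N$ with a border-averaging operator so that the diagonal becomes constant. The one genuinely new feature, compared with the $r=v^u$ case of Theorem~\ref{theo:bound-leakage}, is that the input domain must be ``folded'' from $u$ to $\ell$ dimensions, and this folding is exactly what produces the extra factor $(e^\epsilon)^{u-\ell}$ in item~1.

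\textbf{Reduction phase.} Since there are now more rows than columns, I would run the row-selection dual of Lemma~\ref{lemma:collapse}: for every column $z$ choose a row attaining $\maxj{M}{z}$, and collect a set $\cala\subseteq\calx$ of $v^\ell$ rows containing one maximiser per column (padding with arbitrary rows with dominated entries if the maximisers do not exhaust $\cala$). After permuting columns so that each column maximum lands on the diagonal, I obtain a square matrix $N$ indexed by $\cala\times\cala$ with $N_{a,a}=\maxj{N}{a}$. I then fix $\sim'$ on $\cala$ by transporting $\sim_\ell$ through a bijection $\cala\to\mathit{Val}^\ell$, so that $(\cala,\sim')\cong(\mathit{Val}^\ell,\sim_\ell)$ by construction; in particular the borders are regular, $|{\it Border}_{d}(a)|=\binom{\ell}{d}(v-1)^d$. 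The bijection must be chosen so that $\sim'$-adjacent rows correspond to databases at $\sim_u$-distance at most $u-\ell+1$ (they may disagree on the $u-\ell$ folded coordinates plus one more), whence $\mathit{dist}_{\sim'}(a,a')=d$ forces the genuine $\sim_u$-distance to be at most $u-\ell+d$. Combining this with the authentic $\epsilon$-differential privacy of $M$ and with $N_{a,a'}\le N_{a',a'}$ yields $N_{a',a'}\le(e^\epsilon)^{u-\ell+d}N_{a,a'}$, which is item~1.

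\textbf{Symmetrization phase.} On $N$ I would apply the averaging operator of Lemma~\ref{lemma:vu}, but over the $\sim'$-borders of $\cala$ rather than the Hamming borders of $\mathit{Val}^u$: set $M'_{a,b}=\frac{1}{v^\ell\,|{\it Border}_{d}(a)|}\sum_{i\in\cala}\sum_{j\in{\it Border}_{d}(i)}N_{i,j}$ with $d=\mathit{dist}_{\sim'}(a,b)$. The computations of Lemma~\ref{lemma:vu} go through using the border-regularity of $(\cala,\sim')$: each row of $M'$ sums to $1$, $M'_{a,b}$ depends only on $d$ (hence $M'$ is symmetric), the diagonal is constant and coincides with the column maxima (items~2 and~3), and $\sum_b\maxj{M'}{b}=\sum_b\maxj{N}{b}$, which gives item~4 through the dummy-row convention fixed just before the statement. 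Item~1 transfers to $M'$ because the averaging only combines entries lying at equal $\sim'$-distance, each of which already obeys the $(e^\epsilon)^{u-\ell+d}$ bound relative to the diagonal established for $N$.

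\textbf{Main obstacle.} The hard part is the reduction phase, specifically reconciling item~4 with the geometry demanded by item~1. Preserving $H_\infty$ forces $\cala$ to contain a maximiser of \emph{every} column, yet these maximisers may be few and pairwise $\sim_u$-far (one can build $\epsilon$-private matrices in which a couple of far-apart rows dominate all columns), so they need not fit into any dimension-$\ell$ substructure and a system of distinct representatives need not exist. The delicate step is therefore to carry out the collapsing so that the maxima are relocated onto the diagonal of a $\sim'$-regular $\cala$ while controlling the induced $\sim_u$-distances; this is precisely where the slack $(e^\epsilon)^{u-\ell}$ is spent. A second, related subtlety is that one must obtain the \emph{additive} exponent $u-\ell+d$, not the weaker multiplicative $(u-\ell+1)\,d$ one would get by merely iterating the $\sim'$-adjacent bound; this forces item~1 to be derived directly from the true $\sim_u$-privacy and the distance inequality $\mathit{dist}_{\sim_u}\le u-\ell+\mathit{dist}_{\sim'}$, and then argued to survive the averaging, rather than from $\sim'$-privacy alone.
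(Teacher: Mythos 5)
Your proposal follows essentially the same route as the paper's own proof: a dual of Lemma~\ref{lemma:collapse} (an injection from columns to rows attaining their maxima) producing a square $v^\ell\times v^\ell$ matrix with the column maxima on the diagonal, a relation $\sim'$ obtained by transporting $\sim_\ell$ onto the selected row set $\cala$, and then the averaging operator of Lemma~\ref{lemma:vu}, run over the structure $(\cala,\sim')$, to make the diagonal constant. Your argument that item~1 survives the averaging is also the right one (and is all that Proposition~\ref{prop:new-bound} needs): for the diagonal-versus-row form of item~1, the average combines, border by border, entries each of which satisfies the pointwise $(e^\epsilon)^{u-\ell+d}$ bound, and the borders of $(\cala,\sim')$ have equal size by regularity, so the bound passes to the averages.

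The ``main obstacle'' you single out is genuinely the crux, and you should know that the paper does not resolve it either: its proof simply sets ``$\sim'$ = the projection of $\sim_u$ on $\mathit{Val}^\ell$'' and asserts that point~1 is thereby satisfied. That assertion presupposes exactly the inequality you isolate, namely that the $\sim_u$-distance between any $a,a'\in\cala$ is at most $(u-\ell)$ plus their $\sim'$-distance. Since $\cala$ is the set of column-maximizing rows, it is an essentially arbitrary $v^\ell$-element subset of $\mathit{Val}^u$: a coordinate projection need not even be injective on it (for $v=2$, $u=3$, $\ell=2$, take $\cala=\{000,100,010,001\}$), so ``projection'' cannot be taken literally, and the existence of \emph{some} bijection $\cala\to\mathit{Val}^\ell$ satisfying the distance inequality is a nontrivial combinatorial claim that neither you nor the paper proves. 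So your reconstruction reaches the same depth as the published argument and is more candid about where it is incomplete. One smaller defect is specific to your write-up: the ``padding with arbitrary rows'' fix, for the case where distinct columns share all their maximizing rows, breaks item~3 (a padded row's diagonal entry is not a column maximum); the paper's device is instead to collapse such columns into one another as in Lemma~\ref{lemma:collapse}, at the cost that the resulting matrix may end up smaller than $v^\ell\times v^\ell$ --- which only strengthens the bound of Proposition~\ref{prop:new-bound}, but is again left implicit there.
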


\begin{proof}
We first apply a procedure similar to that of Lemma~\ref{lemma:collapse} to construct a square matrix of size $v^\ell\times v^\ell$ which has the maximum values of each column in the diagonal. (In this case we construct an injection from the columns to rows containing their maximum value, and we eliminate the rows that at the end are not associated to any column.) Then  define $\sim'$ as the projection of $\sim_u$ on $\mathit{Val}^\ell$. Note that point $1$ in this lemma is satisfied by this definition of $\sim'$. 
Finally, apply the procedure in Lemma~\ref{lemma:vu} (on the structure  $(\cala, \sim')$) to make all elements in the diagonal equal and maximal. Note that this procedure preserves the 
property in point $1$, and conditional min-entropy. Hence $H_\infty^{M'}(X) = H_\infty^{M}(X)$. 
\end{proof}

\begin{prop}{\bf \ref{prop:new-bound}.}
Let $\mathcal K$ be a randomized function and let $r = |\mathit{Range}({\mathcal{K})}|$.  
If $\mathcal{K}$ provides \emph{$\epsilon$-differential privacy} then 
the min-entropy leakage associated to $\mathcal{K}$ is bounded from above as follows:
\[I_\infty(X;Z) \, \leq \,\log_2\frac{r\,(e^{\epsilon})^u}{(v-1+e^\epsilon)^\ell-(e^{\epsilon})^\ell+(e^{\epsilon})^u}\]
where $\ell=\lfloor \log_v r\rfloor$.
\end{prop}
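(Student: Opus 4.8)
The plan is to adapt the argument used for Theorem~\ref{theo:bound-leakage}, replacing the two reduction lemmas (\ref{lemma:collapse}~and~\ref{lemma:vu}) by the single Lemma~\ref{lemma:new-bound}. Since the min-entropy leakage is maximised at the uniform input distribution, it suffices to bound $I_\infty^M(X)$ when $X$ is uniform on $\calx = \mathit{Val}^u$; for this case a direct computation gives $I_\infty^M(X) = \log_2 \sum_z \maxj{M}{z}$. First I would invoke Lemma~\ref{lemma:new-bound} to replace the matrix of $\mathcal{K}$ by a square $v^\ell \times v^\ell$ matrix $M'$ whose diagonal carries the common column maximum $\alpha$ and which satisfies the modified privacy inequality (property~1 of that lemma), while preserving the conditional min-entropy. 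Because there are at most $r$ columns and every column maximum is bounded by the global maximum $\alpha$, this yields $\sum_z \maxj{M}{z} \le r\,\alpha$, hence $I_\infty(X;Z) \le \log_2(r\,\alpha)$.

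The core of the proof is then a bound on $\alpha$ obtained from a single row-sum constraint. I would fix the row $i$ whose diagonal entry equals $\alpha$. For a column $k$ at $\sim'$-distance $d$ from $i$, property~1 (taken with $j=i$) yields $M'_{i,k} \ge \alpha/(e^\epsilon)^{u-\ell+d}$, where the extra exponent $u-\ell$ records the collapse of the $u-\ell$ projected coordinates. Since $(\cala,\sim')$ is isomorphic to $(\mathit{Val}^\ell,\sim_\ell)$, the number of columns at distance $d$ from $i$ is $|{\it Border}_d(i)| = \binom{\ell}{d}(v-1)^d$. Treating the diagonal ($d=0$) separately with its exact value $\alpha$, the fact that row $i$ sums to $1$ gives
\[
1 \;\ge\; \alpha + \sum_{d=1}^{\ell} \binom{\ell}{d}(v-1)^d \,\frac{\alpha}{(e^\epsilon)^{u-\ell+d}}.
\]
A binomial expansion, $\sum_{d=0}^{\ell}\binom{\ell}{d}(v-1)^d(e^\epsilon)^{-d} = (v-1+e^\epsilon)^\ell/(e^\epsilon)^\ell$, turns this into $1 \ge \alpha\,[(v-1+e^\epsilon)^\ell - (e^\epsilon)^\ell + (e^\epsilon)^u]/(e^\epsilon)^u$, i.e. $\alpha \le (e^\epsilon)^u / [(v-1+e^\epsilon)^\ell - (e^\epsilon)^\ell + (e^\epsilon)^u]$. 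Substituting into $I_\infty(X;Z)\le\log_2(r\,\alpha)$ yields exactly the claimed bound for the case $r=v^\ell$.

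The step I expect to be the most delicate is the passage to a general $r$ that is not a power of $v$, since Lemma~\ref{lemma:new-bound} is stated only for $r=v^\ell$. For arbitrary $r$ I would set $\ell = \lfloor \log_v r\rfloor$, so that $v^\ell \le r < v^{\ell+1}$, and argue that the row-sum estimate only needs the $v^\ell$ columns forming the hypercube neighbourhood guaranteed around the row attaining $\alpha$; the remaining $r - v^\ell$ columns are absorbed by the crude count $\sum_z \maxj{M}{z}\le r\,\alpha$. The point is to keep the bound on $\alpha$ (which depends on $\ell$ only) separate from the column-count factor $r$, so that the two ingredients recombine into the stated formula. The main bookkeeping hazard is ensuring that the exponent $u-\ell$ in property~1 interacts correctly with the binomial sum and that the diagonal term is handled with its exact value $\alpha$ rather than the weaker lower bound.
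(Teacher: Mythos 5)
Your treatment of the case $r=v^\ell$ is correct and is essentially the paper's own proof: invoke Lemma~\ref{lemma:new-bound}, apply the row-sum constraint to a row whose diagonal entry is $\alpha$, count $|{\it Border}_{d}(i)|=\binom{\ell}{d}(v-1)^{d}$ via the isomorphism with $(\mathit{Val}^\ell,\sim_\ell)$, and use the binomial identity to get $\alpha \le (e^{\epsilon})^{u}/\bigl[(v-1+e^{\epsilon})^{\ell}-(e^{\epsilon})^{\ell}+(e^{\epsilon})^{u}\bigr]$, which combines with $I_\infty(X;Z)\le \log_2(v^\ell\alpha)$. One imprecision even here: $\sum_z \maxj{M}{z}\le r\,\alpha$ is \emph{not} justified by ``every column maximum is bounded by the global maximum $\alpha$''. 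The symmetrization inside Lemma~\ref{lemma:new-bound} averages maxima across columns, so an individual column maximum of the original matrix can exceed the common diagonal value $\alpha$ of $M'$. What you actually need is property~4 of the lemma (preservation of conditional min-entropy), which says precisely that the \emph{sum} of column maxima is preserved, i.e. $\sum_z \maxj{M}{z}=\sum_j \maxj{M'}{j}=v^\ell\alpha$.

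The genuine gap is your passage to general $r$. Your plan keeps all $r$ columns, bounds $\sum_z\maxj{M}{z}\le r\alpha$, and bounds $\alpha$ by a row sum over ``the $v^\ell$ columns forming the hypercube neighbourhood guaranteed around the row attaining $\alpha$''---but no such neighbourhood is guaranteed. The inequalities $M_{i,k}\ge \alpha/(e^{\epsilon})^{u-\ell+d}$ are not consequences of differential privacy alone (which constrains one column across adjacent rows, never two columns in the same row); they are produced by the full construction of Lemma~\ref{lemma:new-bound}---column maxima moved onto the diagonal by an injection of columns into rows, the projected relation $\sim'$, and the averaging that makes the diagonal constant---and that construction requires the column set to have cardinality exactly $v^\ell$ so that it can carry a structure isomorphic to $(\mathit{Val}^\ell,\sim_\ell)$. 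Worse, your two ingredients use $\alpha$ in incompatible senses. If $\alpha$ is the global maximum of the \emph{original} matrix, then $\sum_z\maxj{M}{z}\le r\alpha$ holds trivially, but the bound on $\alpha$ is simply false: take a mechanism in which $r-1$ columns are constant with tiny value $\delta$ and one column carries the remaining mass $1-(r-1)\delta$; this is $\epsilon$-differentially private with range of size $r$, yet its global maximum tends to $1$ as $\delta\to 0$, while for $\ell\ge 1$ the claimed bound on $\alpha$ is strictly below $1$. If instead $\alpha$ is the common diagonal value of a symmetrized $v^\ell$-column matrix, the bound on $\alpha$ holds, but relating $\sum_z\maxj{M}{z}$ of the $r$-column matrix to that $\alpha$ is exactly the missing step. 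The paper closes it by first collapsing the $m=r-v^\ell$ columns with the \emph{smallest} maxima into the $m$ columns with the largest maxima: since each discarded maximum is at most the average $\frac{1}{r}\sum_z\maxj{M}{z}$, the sum of column maxima shrinks by a factor of at most $v^\ell/r$, i.e. $\sum_z\maxj{M}{z}\le \frac{r}{v^\ell}\sum_j\maxj{N}{j}$, and then the exact-power case applied to the resulting $v^\ell$-column matrix $N$ yields $\sum_j\maxj{N}{j}\le v^\ell\alpha$, so the factors recombine into $r\alpha$. Without this collapsing step (or a substitute for it), your argument does not establish the stated bound for $r$ not a power of $v$.
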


\begin{proof}
Assume  first  that $r$ is of the form $v^\ell$. We transform the matrix $M$ associated to ${\mathcal{K}}$ by applying Lemma~\ref{lemma:new-bound}, and let $M'$ be the resulting matrix. 
Let us denote by $\alpha$ the value of every element in the diagonal of $M'$, i.e. $\alpha = M'_{i,i}$ for every row $i$, and let us denote by 
$ {\it Border'}_{d}(i)$ the border (Def~\ref{def:distance-border}) wrt $\sim'$. 
 Note that for every 
$j \in {\it Border'}_{d}(i)$  we have that
$ M'_{i,i}\leq {M'_{i,j}}{(e^{ \epsilon})^{u-\ell+d}}$,
 hence 
\[M'_{i,j}\leq \frac{\alpha}{(e^{ \epsilon})^{u-\ell+d}}\]
Furthermore each element $j$ at $\sim'$-distance $d$ from $i$ can be obtained by changing the value  of $d$ individuals  in the $\ell$-tuple representing $i$ (remember that $(\cala,\sim')$ is isomorphic to $(\mathit{Val}^\ell, sim_\ell)$). 
We can choose those $d$ individuals in $\binom{\ell}{d}$ possible ways, and for each of these individuals we can change the value (with respect to the one in $i$) in  $v-1$ possible ways. 
Therefore 
\[|{\it Border'}_{d}(i)| = \binom{\ell}{d} (v-1)^{d}\] 
Taking into account that for $M'_{i,i}$ we do not need to divide by  $(e^{ \epsilon})^{u-\ell+d}$, we obtain:
\[
\displaystyle  \alpha +\sum_{d=1}^{\ell} \binom{\ell}{d} (v-1)^{d} \frac{\alpha}{(e^{ \epsilon})^{u-\ell+d}} \, \leq  \, \sum_{j} M'_{i,j} 
\]
Since each row represents a probability distribution, the elements of row $i$ must sum up to $1$. Hence:
\begin{equation}
\label{eq:bl1}
\displaystyle \alpha + \sum_{d=1}^{u} \binom{u}{d} (v-1)^{d} \frac{\alpha}{(e^{ \epsilon})^{u-\ell+d}} \, \leq \,   1
\end{equation}
By performing  some simple calculations, similar to those of the proof of Theorem~\ref{theo:bound-leakage}, we obtain:
\[\alpha \, \leq \,\frac{(e^{\epsilon})^u}{(v-1+e^\epsilon)^\ell-(e^{\epsilon})^\ell+(e^{\epsilon})^u}\]
Therefore:	
	\begin{equation}		
		\label{eq:bl2}
		\begin{array}{lclll}
			\displaystyle I_{\infty}^{M'}(X) & \,= \,&H_{\infty}(X) - H_{\infty}^{M'}(X) & \quad&  \text{(by definition)} \\[3ex]
					& \,= \,&\displaystyle \log_{2} v^{u} + \log_2 \sum_{j=1}^{v^\ell} \alpha \frac{1}{v^u} &\\[3ex]
					& \,= \,&\displaystyle\log_{2} v^{u} + \log_2\frac{1}{v^u} + \log_2 (v^\ell\,\alpha)  &\\[3ex]
					& \,\leq \,&\displaystyle \log_2 \frac{v^\ell\,(e^{\epsilon})^u}{(v-1+e^\epsilon)^\ell-(e^{\epsilon})^\ell+(e^{\epsilon})^u}  && \text{(by \eqref{eq:bl1} )}
		\end{array}
	\end{equation}

Consider now the case in which $r$ is not of the form $v^\ell$. Let $\ell$ be the maximum integer such that $v^\ell < r$, and let $m = r-v^\ell$.  We transform the matrix $M$ associated to 
${\mathcal{K}}$ by collapsing the $m$ columns with the  smallest maxima into the $m$ columns with highest maxima. Namely, let $j_1,j_2,\ldots,j_m$ the indices of the columns which  have the
smallest maxima values, i.e. $\maxj{M}{j_t}\leq\maxj{M}{j}$ for every column $j\neq j_1,j_2,\ldots,j_m$. Similarly, let $k_1,k_2,\ldots,k_m$ be the indexes of the columns which have maxima values. 
Then, define
\[
N  =  M[j_1\rightarrow k_1] 	[j_2\rightarrow k_2] 	\ldots [j_m\rightarrow k_m] 		
\]	 
Finally, eliminate the $m$ zero-ed columns to obtain a   matrix with exactly  $v^\ell$ columns. 		
It is easy to show that 
\[
I_\infty^M(X)\, \leq \, I_\infty^N(X) \frac{r}{v^\ell}
\]
After transforming $N$ into a matrix $M'$ with the same min-entropy leakage as described in the first part of this proof,  from (\ref{eq:bl2}) we conclude
\[
I_\infty^M(X)\, \leq \, I_\infty^{M'}(X) \frac{r}{v^\ell} \,\leq \, \displaystyle \log_2 \frac{r\,(e^{\epsilon})^u}{(v-1+e^\epsilon)^\ell-(e^{\epsilon})^\ell+(e^{\epsilon})^u} 
\]
\end{proof}

We now turn our attention to the min-entropy leakage associated to an individual. 

\begin{lemma}
	\label{lemma:Renyi}
	If a randomized function $\mathcal{K}: A \rightarrow B$ respects an $\epsilon$-ratio in the sense that $p_{\mathcal{K}}(b|a') \leq e^{\epsilon} \cdot p_{\mathcal{K}}(b|a'')$ for all $a', a'' \in \mathcal{A}$ and $b \in \mathcal{B}$, then the min-entropy leakage from $A$ to $B$ is bounded by:
	$$I_{\infty}(A;B) \leq \epsilon \log_2 e$$
\end{lemma}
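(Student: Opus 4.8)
The plan is to reduce the statement to the known closed form for the worst-case min-entropy leakage, and then exploit the $\epsilon$-ratio hypothesis column by column. Recall from the background section that, for any prior, the min-entropy leakage satisfies $I_\infty(A;B) \le C_\infty$, where $C_\infty = \max_{p_A} I_\infty(A;B)$ is attained at the uniform input distribution and equals $\log_2$ of the sum of the column maxima of the channel matrix, i.e. $C_\infty = \log_2 \sum_{b\in\mathcal B} \max_{a\in\mathcal A} p(b\mid a)$. Hence it suffices to bound the sum of the column maxima by $e^\epsilon$: once we know $\sum_{b} \max_{a} p(b\mid a) \le e^\epsilon$, we immediately obtain $I_\infty(A;B) \le C_\infty \le \log_2 e^\epsilon = \epsilon \log_2 e$.

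To bound the sum of the column maxima, I would fix an arbitrary input $a_0 \in \mathcal A$ and argue that each column maximum is dominated by the corresponding entry of the single row $a_0$, up to the factor $e^\epsilon$. Concretely, for every output $b$ let $a_b^\ast$ be a row achieving the maximum in column $b$; applying the $\epsilon$-ratio hypothesis with $a' = a_b^\ast$ and $a'' = a_0$ gives $\max_{a} p(b\mid a) = p(b\mid a_b^\ast) \le e^\epsilon\, p(b\mid a_0)$. Summing this inequality over all outputs yields
\[
\sum_{b\in\mathcal B} \max_{a\in\mathcal A} p(b\mid a) \;\le\; e^\epsilon \sum_{b\in\mathcal B} p(b\mid a_0) \;=\; e^\epsilon,
\]
where the last equality holds because $p(\cdot\mid a_0)$ is a probability distribution and therefore sums to $1$. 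Combining this with the characterization of $C_\infty$ above closes the argument.

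I do not expect a genuine obstacle here: the entire proof is a one-line application of the hypothesis to the formula for the worst-case leakage, and the only slightly delicate point is conceptual rather than technical. That point is the appeal to $I_\infty(A;B) \le C_\infty$ uniformly over \emph{all} priors, which is exactly the cited fact that $C_\infty$ is the maximum of $I_\infty$ over input distributions (realized at the uniform one); this is what makes the resulting bound prior-independent, as the lemma requires. Everything else is routine.
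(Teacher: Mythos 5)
Your proof is correct, but it follows a genuinely different route from the paper's. The paper proves the bound directly for an arbitrary prior: it expands $-H_\infty(A|B) = \log_2 \sum_{b} \max_{a} \bigl(p(a)\,p(b|a)\bigr)$ via Bayes' theorem, applies the $\epsilon$-ratio hypothesis inside the maximum to replace $p(b|a)$ by $e^\epsilon\, p(b|\hat{a})$ for one fixed row $\hat{a}$, factors out $\max_a p(a)$, and uses $\sum_b p(b|\hat{a}) = 1$ to conclude $H_\infty(A|B) \geq H_\infty(A) - \epsilon \log_2 e$. You instead reduce to the worst-case prior by invoking the capacity characterization $C_\infty = \log_2 \sum_b \max_a p(b|a)$ of Braun et al.\ (cited in the background section), and then bound the sum of column maxima by $e^\epsilon$. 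Note that the core inequality is the same in both arguments: every column maximum is dominated by $e^\epsilon$ times the corresponding entry of a single fixed row, and that row sums to $1$; the difference is whether the prior $p(a)$ is carried through the computation (paper) or eliminated up front by the capacity reduction (you). Your version buys brevity and modularity, and in fact yields the marginally stronger packaging that the capacity itself is at most $\epsilon \log_2 e$; the paper's version buys self-containedness, since it does not rest on the nontrivial external result that the worst case is attained at the uniform distribution with that closed form. One caveat worth flagging: the background section of the paper states the capacity formula without the $\log_2$ (evidently a typo), whereas your argument uses the correct form $C_\infty = \log_2 \sum_b \max_a p(b|a)$; since that is what the cited result of Braun et al.\ actually establishes, your proof stands as written.
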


\begin{proof}

	For clarity reasons, in this proof we use the notation $p(b|A=a)$ for the probability distributions $p_{\mathcal{K}}(b|A=a)$ associated to $\mathcal{K}$.

	\begin{equation*}		
		\label{eq:renyi-cond-entropy-aux}
		\begin{array}{lcll}
			 - H_{\infty}(A|B) & =    &  \log_2 \sum_{b} p(b) \max_{a} p(a|b) &\quad \mbox{(by definition)} \\[2ex]
			                                & =    &  \log_2 \sum_{b} \max_{a} (p(b)\, p(a|b)) &\quad \mbox{} \\[2ex]
			                                & =    &  \log_2 \sum_{b} \max_{a} (p(a)\, p(b|a) )&\quad \mbox{(by the Bayes theorem)} \\[2ex]
			                                & \leq &  \log_2 \sum_{b} \max_{a} ( p(a)\, e^{\epsilon}\, p(b|\hat{a}) )&\quad \mbox{(by hypothesis on $\mathcal{K}$, for some fixed $\hat{a}$)} \\[2ex]
			                                & =    &  \log_2 \sum_{b} e^{\epsilon} \, p(b|\hat{a}) \max_{a} p(a) &\quad \mbox{} \\[2ex]
			                                & =    &  \log_2 \left( e^{\epsilon} \max_{a} p(a) \sum_{b} p(b|\hat{a})  \right) &\quad \mbox{} \\[2ex]
			                                & =    &  \log_2 \left( e^{\epsilon} \max_{a} p(a) \right) &\quad \mbox{(by probability laws)} \\[2ex]
			                                & =    &  \log_2 e^\epsilon + \log \max_{a} p(a)   & \quad\mbox{} \\[2ex]
			                                & =    &  \epsilon  \log_2 e - H_{\infty}(A)       & \quad\mbox{(by definition)} 
		\end{array}
	\end{equation*}
	
	Therefore:
	
	\begin{equation}
		\label{eq:renyi-cond-entropy}
		H_{\infty}(A|B) \geq H_{\infty(A)} - \epsilon \log_2 e
	\end{equation}
	
	This gives us a bound on  the min-entropy leakage:
	
	\begin{equation*}		
		\label{eq:renyi-mutual-info}
		\begin{array}{lcll}
			\displaystyle I_{\infty}(A;B) & =    & \displaystyle H_{\infty}(A) - H_{\infty}(A|B) &   \\[2ex]
			                              & \leq & \displaystyle \epsilon \log_2 e & \quad\mbox{(by~(\ref{eq:renyi-cond-entropy}))}  
		\end{array}
	\end{equation*}
	
\end{proof}

\begin{theo}{\bf \ref{theo:Renyi}.}
	If $\mathcal{K}$ provides \emph{$\epsilon$-differential privacy} then for all $D^- \in {\it Val}^{u-1}$ the min-entropy leakage about an individual is bounded from above as follows:
	\[ I_{\infty}(X_{D^-};Z) \leq \log_2 e^\epsilon \]
\end{theo}

\begin{proof}
	By construction, the elements of $\mathcal{X}_{D^-}$ are all adjacent. Hence $\mathcal{K}_{D^-}$ respects an $\epsilon$-ratio. Thus we are allowed  to apply  Lemma~\ref{lemma:Renyi} (with $X=X_{D^-}$ and $\mathcal{K}=\mathcal{K}_{D^-}$), which gives immediately the intended result.
\end{proof}

\paragraph{\large Utility} 
\ \\[2ex]
In this part we prove the results on utility. We start with a lemma which plays a role analogous to Lemma~\ref{lemma:vu}, but for a different kind of graph structure: in this case, we require the graph to have an automorphism with a single orbit. 

\begin{lemma}
	\label{lemma:automorphism} 
	Let $M$ be the matrix of a channel with the same input and output alphabet $\cala$.
	Assume an adjacency relation $\sim$ on $\cala$ such that the graph $(\mathcal{A}, \sim)$
	has an automorphism $\sigma$ with a single orbit.
	Assume that the maximum value of each column is on the diagonal, that is $M_{i,i} = \maxj{M}{i}$ for all $i\in \cala$.
	If $M$ provides $\epsilon$-differential privacy then	we can construct a new channel matrix $M'$ such that:
	\begin{enumerate}
		\item $M'$ provides $\epsilon$-differential privacy;
		\item $M'_{i,i} = M'_{h,h}$ for all  $i,h \in \cala$;
		\item $M'_{i,i} = \maxj{M'}{i}$ for all $i\in \cala$;
		\item $H_\infty^{M}(A) = H_\infty^{M'}(A)$.		
	\end{enumerate}
\end{lemma}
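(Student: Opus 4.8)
The plan is to obtain $M'$ by \emph{symmetrizing} $M$ over the cyclic group generated by the automorphism $\sigma$. Writing $n=|\cala|$, I would define
\[
M'_{i,j}=\frac{1}{n}\sum_{t=0}^{n-1}M_{\sigma^t(i),\,\sigma^t(j)}.
\]
The first routine check is that $M'$ is a legal channel matrix: for fixed $i$ and $t$, as $j$ ranges over $\cala$ so does $\sigma^t(j)$, hence each inner row sum $\sum_j M_{\sigma^t(i),\sigma^t(j)}$ equals $1$, and averaging $n$ such sums gives $\sum_j M'_{i,j}=1$.

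For property 1, I would exploit that $\sigma$ is a graph automorphism, so $i\sim h$ implies $\sigma^t(i)\sim\sigma^t(h)$ for every $t$. Applying the $\epsilon$-differential privacy of $M$ term by term along the orbit gives $M_{\sigma^t(i),\sigma^t(k)}\le e^\epsilon M_{\sigma^t(h),\sigma^t(k)}$, and summing over $t$ yields $M'_{i,k}\le e^\epsilon M'_{h,k}$, as required. Properties 2 and 3 both rely on the single-orbit hypothesis. Since $\{\sigma^t(i):0\le t<n\}=\cala$, the diagonal entry $M'_{i,i}=\frac{1}{n}\sum_t M_{\sigma^t(i),\sigma^t(i)}=\frac{1}{n}\sum_{a\in\cala}M_{a,a}$ is independent of $i$, which gives property 2. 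For property 3, each $M_{\sigma^t(h),\sigma^t(i)}$ is dominated by the column maximum of $M$, namely $M_{\sigma^t(i),\sigma^t(i)}$ (which lies on the diagonal by hypothesis), so summing over $t$ gives $M'_{h,i}\le M'_{i,i}$, i.e. the diagonal of $M'$ carries the column maxima.

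Finally, for property 4 I would recall that (with $A$ uniform, as in Lemma~\ref{lemma:vu}) the conditional min-entropy depends on the matrix only through the sum of its column maxima, $H_\infty^{M}(A)=-\log_2\frac{1}{n}\sum_j\maxj{M}{j}$. Since $M$ has its maxima on the diagonal, $\sum_j\maxj{M}{j}=\sum_{a}M_{a,a}$; and by properties 2--3 the corresponding sum for $M'$ equals $n\,M'_{i,i}=\sum_{a}M_{a,a}$. Hence the two column-maxima sums coincide and $H_\infty^{M}(A)=H_\infty^{M'}(A)$.

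The construction and all four verifications are short once the right averaging is chosen, so the only genuine difficulty is conceptual: recognizing that the single-orbit automorphism should play the role of a symmetry group to average over, in exact parallel with the way Lemma~\ref{lemma:vu} averages over borders of equal size. The analogue there of ``$|\calb_{i,d}|$ is independent of $i$'' is here the transitivity supplied by the single orbit, which is precisely what forces the diagonal of $M'$ to become constant while the automorphism property is what preserves $\epsilon$-differential privacy.
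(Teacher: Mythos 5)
Your proposal is correct and takes essentially the same route as the paper's own proof: the identical orbit-averaging construction $M'_{i,j}=\frac{1}{n}\sum_{t=0}^{n-1}M_{\sigma^t(i),\sigma^t(j)}$, with the same four verifications (automorphism preserves adjacency for differential privacy, single orbit forces a constant diagonal, column maxima dominate termwise, and the sum of column maxima—hence $H_\infty$ under the uniform prior—is unchanged). If anything, your justification of the row sums via bijectivity of each $\sigma^t$ is slightly cleaner than the paper's appeal to the single-orbit property, which is not needed for that step.
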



\begin{proof}
	Let $n=|\cala|$. For every $h,k\in\cala$ let us define the elements of $M'$ as:
	
	\begin{equation*}
		M'_{h,k} = \frac{1}{n} \sum_{i=0}^{n-1}M_{\sigma^{i}(h),\sigma^{i}(k)}
	\end{equation*}
	First we prove that $M'$ provides $\epsilon$-differential privacy. For every pair $h \sim l$ and every $k$:
	\begin{equation*}
		\begin{array}{lcll}
			M'_{h,k} & =    & \displaystyle \sum_{i=0}^{n-1} M_{\sigma^{i}(h),\sigma^{i}(k)} & \\[3ex]
							& \leq & \displaystyle \sum_{i=0}^{n-1} e^\epsilon M_{\sigma^{i}(l),\sigma^i(k)}& \qquad\text{(by $\epsilon$-diff. privacy, for some $l$ s.t. $\rho(\sigma^{i}(h')) = k$)} \\[3ex]
							& =    & e^\epsilon M'_{l,k} &\\
		\end{array}
	\end{equation*}
	Now we prove that for every $h$, $M'_{h,\cdot}$ is a legal probability distribution. Remember that $\{\sigma^i(k)|0\leq i\leq n-1\}=\cala$ since $\sigma$ has a single orbit.
	\begin{equation*}
		\begin{array}{lcll}
			\displaystyle \sum_{k=0}^{n-1} M'_{h,k} 
				& = & \displaystyle \sum_{k=0}^{n-1} \frac{1}{n} \sum_{i=0}^{n-1}M_{\sigma^{i}(h),\sigma^{i}(k)}, &\\[3ex]
				& = & \displaystyle \sum_{i=0}^{n-1} \frac{1}{n} \sum_{k=0}^{n-1}M_{\sigma^{i}(h) ,\sigma^{i}(k)} & \\[3ex]
				& = & \displaystyle \sum_{i=0}^{n-1} \frac{1}{n} \,  1 &\quad\text{(since $\{\sigma^i(k)|0\leq i\leq n-1\}=\cala$ )} \\[3ex]
				& = & \displaystyle 1 & \\
		\end{array}
	\end{equation*}
	Next we prove that the diagonal contains the maximum value of each column, i.e., for every $k$, $M'_{k,k} = \maxj{M'}{k}$.
	\begin{equation*}
		\begin{array}{lcll}
			\displaystyle M'_{k,k} & =    & \displaystyle \frac{1}{n} \sum_{k=0}^{n-1} M_{\sigma^{i}(k),\sigma^{i}(k)} & \\[3ex]
														& \geq & \displaystyle \frac{1}{n} \sum_{k=0}^{n-1} M_{\sigma^{i}(h),\sigma^{i}(k)} \quad\text{{(since $M_{\sigma^{i}(h),\sigma^{i}(h)}=\textstyle \maxj{M}{\sigma^i(h)}$)}}&\\[3ex]
														& =    & \displaystyle M'_{hk} &\\
		\end{array}
	\end{equation*}
	Finally, we prove that $I_\infty^{M'}(A) = I_\infty^{M}(A)$. It is enough to prove that $H_\infty^{M'}(A) = H_\infty^{M}(A)$.
	\begin{equation*}
		\begin{array}{lcll}
			\displaystyle H_\infty^{M'}(A) & =    & \displaystyle \sum_{h=0}^{n-1} M_{h,h} & \\[3ex]
					&= & \displaystyle\frac{1}{n} \sum_{h=0}^{n-1}\sum_{i=0}^{n-1}   M_{\sigma^{i}(h),\sigma^{i}(h)} & \quad\text{(since $\{\sigma^i(h)|0\leq i\leq n-1\}=\cala$)} \\[3ex]
				         & = & \displaystyle \frac{1}{n} \sum_{h=0}^{n-1}  H_\infty^{M}(A) &  \quad\text{(since $M_{\sigma^{i}(h),\sigma^{i}(h)}=\maxj{M}{\sigma^i(h)}$)} \\[3ex]
					& = & \displaystyle H_\infty^{M}(A) &\\
		\end{array}
	\end{equation*}

\end{proof}

\begin{theo} {\bf \ref{theorem:utility-bound}.}
Let $\cal H$ be a randomization mechanism for the randomized function $\cal K$ and the query $f$, and assume that $\cal K$ provides $\epsilon$-differential privacy. 
Assume that $({\cal Y},\sim)$ admits a graph automorphism with a single orbit. Furthermore, assume that there exists a natural number $c$ and an element $y\in{\cal Y}$ such that, \rev{for every natural number $d>0$}, either $|{\it Border}_{d}(y)| = 0$ or $|{\it Border}_{d}(y)| \geq c$. Then
\[
\calu(X,Y) \leq \frac{(e^\epsilon)^n(1-e^\epsilon)}{(e^\epsilon)^n(1-e^\epsilon) + c\,(1-(e^\epsilon)^n)}
\]
where $n$ is the maximum distance from $y$ in $\cal Y$.
\end{theo}

\begin{proof}

Consider the matrix $M$  obtained by applying Lemma~\ref{lemma:collapse} to the matrix of  $\cal H$, and then Lemma~\ref{lemma:automorphism} to the result of Lemma ~\ref{lemma:collapse}. Let us call $\alpha$ the value of the elements in the diagonal of $M$.

	Let us take an element $M_{i,i} = \alpha$. For each element $j\in{\it Border}_{d}(M_{i,i})$, the value of $M_{i,j}$  can be at most $\frac{\alpha}{e^{d \epsilon}}$. Also, the elements of row $i$ represent a probability distribution, so they sum up to 1. Hence we obtain:
\[\displaystyle \alpha + \sum_{d=1}^{n} |{\it Border}_{d}(y)| \frac{\alpha}{(e^\epsilon)^d}\; \leq\; 1\]
Now we perform some simple calculations:	
	\begin{equation*}		
		\label{eq:theorem:bl2}
			\begin{split}
				\displaystyle \alpha + \sum_{d=1}^{n} |{\it Border}_{d}(y)| \frac{\alpha}{(e^\epsilon)^d}\; \leq\; 1 & \implies \text{(since by hypothesis $ |{\it Border}(y,d)| \geq c$)} \\		
				\displaystyle \alpha + \sum_{d=1}^{n} c\, \frac{\alpha}{(e^\epsilon)^d} \;\leq \;1 & \iff \text{} \\		
				\displaystyle \alpha \,(e^\epsilon)^n+ c \, \alpha\, \sum_{d=1}^{n} {(e^\epsilon)^{n-d}} \leq (e^\epsilon)^n & \iff \text{} \\	
				\displaystyle \alpha \,(e^\epsilon)^n+ c\,\alpha\, \sum_{t=0}^{n-1} {(e^\epsilon)^t} \leq (e^\epsilon)^n & \iff \text{(geometric progression sum)} \\	
				\displaystyle \alpha \,(e^\epsilon)^n+ c\, \alpha\, \frac{ 1-(e^{\epsilon})^n}{1-e^{\epsilon}} \leq (e^\epsilon)^n & \iff \text{} \\
				\displaystyle \alpha \leq \frac{(e^{\epsilon})^n (1-e^{\epsilon})}{(e^{\epsilon})^n(1-e^{\epsilon}) + c\,(1-(e^{\epsilon})^n)} & \text{} \\		
			\end{split}
	\end{equation*}
	Since $\mathcal{U}(Y,Z) = \alpha$,  we conclude.	

\end{proof}

\begin{theo}{\bf \ref{theorem:construction-h}.}
Let $f:\calx\rightarrow \caly$ be a query and  let $\epsilon\geq 0$. Assume that $({\cal Y},\sim)$  admits a graph automorphism with a single orbit, and that there exists $c$ such that, for every $y\in \caly$ and \rev{every natural number $d>0$}, either $|{\it Border}_{d}(y)| = 0$ or $|{\it Border}_{d}(y)| = c$. Then, for such $c$,  the definition  in (\ref{eqn:optimal-matrix}) determines a legal channel matrix for $\cal H$, i.e.,  for each $y \in \caly$, $p_{Z|Y}(\cdot|y)$ is a probability distribution. Furthermore,   the composition  $\cal K$ of $f$ and $\cal H$ provides $\epsilon$-differential privacy. Finally, $\cal H$ is optimal in the sense that it maximizes utility when the  distribution of $Y$ is uniform. 
\end{theo}

\begin{proof}
We follow a reasoning analogous to the proof of Theorem~\ref{theorem:utility-bound}, but using $ |{\it Border}(y,d)| = c$, to prove that   
\[
 \mathcal{U}(Y,Z) = \frac{(e^{\epsilon})^n (1-e^{\epsilon})}{(e^{\epsilon})^n(1-e^{\epsilon}) + c\,(1-(e^{\epsilon})^n)}
\]
 From the same theorem, we know that this is a maximum for the utility. 
\end{proof}

\end{document}